\newcommand{\TODO}[2][inline]{\todo[#1]{\texttt{TODO:} #2}}
\crefname{figure}{figure\xspace{}}{figures\xspace{}}
\Crefname{figure}{Figure\xspace{}}{Figures\xspace{}}
\newacronym[longplural={Bags-of-Words}]{BoW}{BoW}{Bag-of-Words}
\newacronym{tf}{tf}{term frequency}
\newacronym{tfidf}{tf--idf}{term frequency--inverse document frequency}
\newacronym{tfvec}{tf vector}{term frequency vector}
\newacronym{tfidfvec}{tf--idf vector}{term frequency--inverse document frequency vector}
\newacronym{PII}{PII}{personally identifiable information}
\newacronym{WP}{WritePrints}{WritePrints} 
\newacronym{WPL}{WritePrints (Limited)}{WritePrints (Limited)} 
\newacronym{VSM}{VSM}{vector space model}
\newacronym{NLP}{NLP}{natural language processing}
\newacronym{TM}{TM}{text mining}
\newacronym{IR}{IR}{information retrieval}
\providecommand{\numberset}[1]{\ensuremath{\mathbb{#1}}}
\providecommand{\Reals}{\numberset{R}} 
\providecommand{\RealsGEQ}[1]{\Reals_{\geq{#1}}} 
\providecommand{\Nats}{\numberset{N}} 
\DeclarePairedDelimiter{\parens}{\lparen}{\rparen}
\DeclarePairedDelimiter{\set}{\lbrace}{\rbrace}
\providecommand{\interval}[2][0]{\ensuremath{\left[#1,#2\right]}} 
\providecommand{\iinterval}[2][1]{\interval[#1]{#2}} 
\providecommand{\irange}[2][1]{\iinterval[#1]{#2}} 
\DeclarePairedDelimiter{\abs}{\lvert}{\rvert}
\DeclarePairedDelimiter{\norm}{\lVert}{\rVert} 
\DeclareMathOperator{\dist}{d}
\newcommand{\dX}{\ensuremath{\dist_\X}}
\newcommand{\inv}[1][-1]{^{#1}}
\DeclarePairedDelimiter{\brackets}{\lbrack}{\rbrack}
\newcommand{\pr}[2][]{\Pr\brackets[#1]{#2}}
\DeclareMathOperator{\Cat}{Cat} 
\providecommand{\samplefrom}{\ensuremath{\leftarrow_\mathrm{R}}}
\newcommand{\script}[1]{\ensuremath{\mathcal{#1}}}
\providecommand{\X}{\script{X}}
\providecommand{\Y}{\script{Y}}
\providecommand{\Z}{\script{Z}}
\newcommand{\T}{\script{T}}
\newcommand{\Universe}{\script{V}}
\newcommand{\RandVars}{\script{R}} 
\providecommand{\rating}{\rho}
\newcommand{\mechanism}[1]{\ensuremath{\mathcal{#1}}}
\newcommandx{\Em}[2][1=\epsilon,2=\rating]{\ensuremath{\mechanism{E}_{#1,#2}}}
\newcommandx{\Lm}[2][1=\epsilon,2=f]{\ensuremath{\mechanism{L}_{#1,#2}}}
\newcommand{\M}{\mechanism{M}}
\newcommand{\A}{\mechanism{A}}
\newcommand{\B}{\mechanism{B}}
\providecommand{\dlp}{differential privacy\xspace}
\providecommand{\dyp}{differentially private\xspace}
\newcommand{\adj}{\sim}
\newcommand{\doc}{\ensuremath{t}} 
\newcommand{\docvec}{\ensuremath{\mathbf{\doc}}}
\newcommand{\comp}{\ensuremath{\theta}}
\newcommand{\compvec}{\ensuremath{\bm{\uptheta}}}
\newcommand{\res}{\ensuremath{s}}
\newcommand{\resvec}{\ensuremath{\mathbf{s}}}
\newcommand{\reslen}{\ensuremath{n}}
\newcommand{\Vocab}{\ensuremath{V}}
\newcommand{\VocSize}{\ensuremath{L}}
\newcommandx{\smexp}[4][1=\epsilon,2=\rating,3=x,4=z]{\ensuremath{\exp\left(\frac{#1}{2\Delta} #2(#3,#4) \right)}}
\newcommandx{\smint}[4][1=\epsilon,2=\rating,3=x,4=z]{\ensuremath{\frac{\smexp[#1][#2][#3][#4]}{\int_{#4'} \smexp[#1][#2][#3][#4']d #4'}}}
\newcommandx{\smsum}[4][1=\epsilon,2=\rating,3=x,4=z]{\ensuremath{\frac{\smexp[#1][#2][#3][#4]}{\sum_{#4'} \smexp[#1][#2][#3][#4']}}}
\providecommand{\abs}[1]{\ensuremath{{\lvert#1\rvert}}}
\providecommand{\norm}[1]{\ensuremath{\lVert#1\rVert}}
\DeclareMathOperator{\Ima}{Im}
\newcommand{\algname}{SynTF\xspace}
\begin{document}

\copyrightyear{2018} 
\acmYear{2018} 
\setcopyright{none} 
\acmConference[SIGIR '18]{The 41st International ACM SIGIR Conference on Research \& Development in Information Retrieval}{July 8--12, 2018}{Ann Arbor, MI, USA}
\acmBooktitle{SIGIR '18: The 41st International ACM SIGIR Conference on Research \& Development in Information Retrieval, July 8--12, 2018, Ann Arbor, MI, USA}
\acmPrice{15.00}
\acmDOI{10.1145/3209978.3210008}
\acmISBN{978-1-4503-5657-2/18/07}

\settopmatter{printacmref=false, printccs=false, printfolios=true} 
\fancyhead{}
\renewcommand\footnotetextcopyrightpermission[1]{} 

\title[\algname: Synthetic and Differentially Private Term Frequency Vectors]
{\algname: Synthetic and Differentially Private Term Frequency Vectors for Privacy-Preserving Text Mining}
\subtitle{Extended Version}
\subtitlenote{This report is an extended version of our paper of the same name,
	to appear in the proceedings of
	\textit{SIGIR ’18: The 41st International ACM SIGIR Conference on Research \& Development in Information Retrieval}.}

\author{Benjamin Weggenmann}
\affiliation{%
	\institution{SAP Security Research}
}
\email{benjamin.weggenmann@sap.com}

\author{Florian Kerschbaum}
\affiliation{%
	\institution{University of Waterloo}
}
\email{florian.kerschbaum@uwaterloo.ca}

\renewcommand{\shortauthors}{B.~Weggenmann and F.~Kerschbaum}

\begin{abstract}

Text mining and \acrlong{IR} techniques
have been developed to
assist us with analyzing, organizing and retrieving documents
with the help of computers.
In many cases, it is desirable that the authors of such documents remain anonymous:
Search logs can reveal sensitive details about a user,
critical articles or messages about a company or government
might have severe or fatal consequences for a critic,
and negative feedback in customer surveys might negatively impact business relations
if they are identified.
Simply removing personally identifying information from a document is, however,
insufficient to protect the writer's identity:
Given some reference texts of suspect authors,
so-called \emph{authorship attribution} methods
can reidentfy the author from the text itself.

One of the most prominent models to represent documents
in many common text mining and \acrlong{IR} tasks is the \acrlong{VSM}
where each document is represented as a vector,
typically containing its term frequencies or related quantities.
We therefore propose an automated text anonymization approach
that produces synthetic term frequency vectors for the input documents
that can be used in lieu of the original vectors.
We evaluate our method on an exemplary text classification task
and demonstrate that it only has a low impact on its accuracy.
In contrast, we show that our method strongly affects authorship attribution techniques
to the level that they become infeasible with a much stronger decline in accuracy.
Other than previous authorship obfuscation methods, 
our approach 
is the first that fulfills \emph{differential privacy} 
and hence comes with a provable plausible deniability guarantee.

\end{abstract}

%
%
\begin{CCSXML}
	<ccs2012>
	<concept>
	<concept_id>10002978.10003018.10003019</concept_id>
	<concept_desc>Security and privacy~Data anonymization and sanitization</concept_desc>
	<concept_significance>500</concept_significance>
	</concept>
	<concept>
	<concept_id>10002951.10003317.10003318</concept_id>
	<concept_desc>Information systems~Document representation</concept_desc>
	<concept_significance>300</concept_significance>
	</concept>
	<concept>
	<concept_id>10002951.10003317.10003347.10003356</concept_id>
	<concept_desc>Information systems~Clustering and classification</concept_desc>
	<concept_significance>500</concept_significance>
	</concept>
	<concept>
	<concept_id>10002951.10003227.10003351</concept_id>
	<concept_desc>Information systems~Data mining</concept_desc>
	<concept_significance>100</concept_significance>
	</concept>
	</ccs2012>
\end{CCSXML}

\ccsdesc[500]{Security and privacy~Data anonymization and sanitization}
\ccsdesc[300]{Information systems~Document representation}
\ccsdesc[500]{Information systems~Clustering and classification}
\ccsdesc[100]{Information systems~Data mining}

\keywords{Text Classification; Differential Privacy; Synthetic Data;
	Authorship Attribution; Authorship Obfuscation;
	Anonymization; Text Mining}

\maketitle

\section{Introduction}
\label{sec:introduction}


For centuries, text has been used to convey information between human beings
through books, letters, newspapers and magazines.
With the advent of the digital age, 
more and more textual data is being processed and analyzed by machines.
Typical tasks include 
	text classification, which is used in particular for spam filtering \cite{sahami1998bayesian}
	and automated email routing \cite{busemann2000message},
	document retrieval \cite{salton1975vector},
	where indexed documents are retrieved and ranked according to search queries,
	sentiment analysis \cite{liu2012sentiment},
and a wide variety of other tasks in the \gls{IR} and text mining domains.
\TODO{less prose below}

In many cases, it is desirable for an author that his writings stay anonymous.
This could be the case if the textual data contains sensitive information about the author,
for instance in search queries.
Negative feedback from customer surveys might negatively impact business relations
if the author or his company is known, and critical news or blog articles
about a company (or government) might have severe (or fatal) consequences
for the author of the article.
In other areas, anonymity is required for compliance or legal reasons,
e.g. in the selection of job candidates to eliminate discrimination.
Furthermore, without anonymity people and data owners might feel reluctant to participate
in surveys or to release their data. Offering anonymity might be a means to convince them
to share their data in an anonymized form, which could then be used to perform evaluations
and as training data for machine learning models.


Traditional sanitization approaches for free text include removing parts
containing \gls{PII} such as the author's name, or replacing it with a pseudonym.
However, these methods are insufficient to protect the author's identity:
As the famous Netflix de-anonymization attack \cite{narayanan2008robust}
and other studies \cite{rao2000can,sweeney2000simple,jawurek2011smart,de2013unique}
have shown,
the \emph{originator} of data can be \emph{re-identified from the data itself}.
%
We illustrate this in the case of the AOL search data release \cite{barbaro2006face},
where search queries of over 650,000 users were released for research purposes in 2006.
The search logs were \enquote{anonymized} by linking the queries to 
a numerical identifier instead of the actual user name.
After some investigation in the data,
the New York Times eventually learned enough information about user 4417749
so they could re-identify her as Thelma Arnold, a 62-year-old widow from Lilburn,
a city in Georgia.

The task of attributing authorship of an anonymous or disputed document to its respective author
is called \emph{authorship attribution}.
Such methods usually make use of stylistic features to identify or discriminate authors,
as has been done with the statistic techniques in \cite{mosteller1963inference} to resolve the dispute of the Federalist Papers.
Recently, more sophisticated methods have evolved that use statistical analysis
and machine learning to tackle the problem. 
A survey of modern authorship attribution methods is given by Stamatatos \cite{stamatatos2009survey},
and examples include the JGAAP \cite{juola2006prototype} and JStylo \cite{mcdonald2012use} frameworks.
While these powerful methods are useful in the literary world and in forensics,
they can often pose a threat to the privacy and integrity of authors of documents
with potentially sensitive content.

\paragraph*{Contributions}
Many \gls{IR} and text mining algorithms rely on the \gls{VSM} \cite{salton1975vector}
where documents are represented as vectors containing, for instance,
their \gls{tf} or \gls{tfidf} values.
Therefore, we propose a solution that targets this representation
and produces synthetic \acrshortpl{tfvec} which can be used as a substitute
for the original ones.
More precisely, we make the following contributions:
\begin{itemize}
	\item In \cref{sec:approach}, we propose ``\algname'', a \dyp method to compute anonymized,
	\emph{synthetic \acrlongpl{tfvec}} for textual data that can be used as feature vectors
	for common \gls{IR} and text mining tasks such as text classification. 
	
	\item In \cref{sec:dp-proofs}, we give theoretical results on the \dlp properties of our method.
	We derive improved bounds for the privacy loss of our method
	and give a heuristic argument that
	\dlp on large (discrete) output spaces demands a large privacy loss
	if the result should fulfill a minimum usefulness requirement.
	
	\item In \cref{sec:results},
	we experimentally verify our method on a corpus of newsgroups postings:
	A benign, well-intended analyst wants to classify the documents into certain topics,
	whereas a malicious attacker tries to re-identify the author of these documents
	using authorship attribution techniques.
	The results show that our method
	has a much stronger impact on the attacker's 
	than on the analyst's task. 
	
\end{itemize}
Based on our motivation and results,
we presume that the synthetic \acrlongpl{tfvec} (SynTF vectors)
can be used in a multitude of text mining and \gls{IR} tasks
where the semantic similarity of documents is decisive.
On the other hand, our method obliterates stylistic features
that could otherwise reveal the identity and other privacy-sensitive information
about the writer such as age or gender.



\section{Preliminaries} 
\label{sec:prelim}
In this section, 
we briefly describe text classification,
and follow with a more detailed introduction on \dlp.




\subsection{Text Classification}
\label{sec:prelim:classification}

Text classification is the problem of assigning a given text to one or more predefined categories.
It has many applications, for instance in the automated sorting and filtering of email messages, 
spam filtering, categorization of news articles, etc.
The problem is typically solved using machine learning techniques.
In the supervised model, a classifier is \emph{trained}
based on a set of documents with known categories
so it can recognize characteristic features in the text
that indicate the right category.
A trained classifier can then \emph{predict} the most likely category
for new texts whose category is unknown.

To obtain a representation corresponding to the vector space and \gls{BoW} models,
documents are transformed into feature vectors where each entry corresponds
to a certain word in an underlying vocabulary.
The process of this transformation is also called \emph{vectorization}.
Two common representations are \emph{term frequency} (tf) vectors
where each entry equals the number of occurrences of the corresponding term in the document,
and the \emph{term frequency -- inverse document frequency} (tf--idf) vectors 
which are derived from the \acrshortpl{tfvec} by also taking the number of documents into account
that contain the corresponding term.
We refer to \cite{schutze2008introduction} for more information on text mining and information retrieval.

\subsection{Differential Privacy}
\label{sec:prelim:diffpriv}

Differential privacy has first been proposed by Dwork et al. \cite{dwork2006calibrating} 
under the name \emph{$\epsilon$\-/indistinguishability}.
It works by releasing noisy answers to the database queries,
where the noisy results on two databases that differ in only a single record
are probabilistically \emph{indistinguishable} up to a multiplicative factor.
We give some basic terminology and results as required in the paper.
For a broader introduction and further details on \dlp,
we refer the reader to the book by Dwork and Roth \cite{dwork2014algorithmic}.
We follow the notation of \cite{chatzikokolakis2013broadening},
with the one deviation that we describe random mechanisms via random variables
instead of probability measures on the output space.
Since every random variable induces a probability measure on the underlying space, 
the two definitions are equivalent.

\begin{definition}[Randomized mechanism]
	\label{def:randmech}
	Let $\X$ and $\Z$ be two sets where $\Z$ is measurable,
	and let $\RandVars(\Z)$ be the set of random variables on $\Z$.
	A \emph{randomized mechanism} from $\X$ to $\Z$ is a probabilistic function $\M:\X\to\RandVars(\Z)$
	that assigns a random variable on $\Z$ to each input $x\in\X$.
	From an algorithmic point of view,
	we \emph{run} an instance of a randomized mechanism $\M$ on a given input $x$
	by \emph{sampling} a realization $z$ of the random variable $\M(x)$.
	We write this as $z\samplefrom\M(x)$.
\end{definition}
As noted above, each random variable on $\Z$ induces a probability distribution on $\Z$.
A continuous/discrete distribution is typically described by its probability density/mass function (pdf/pmf).
By slight abuse of notation, we write $\Pr[X=x]$ for the pdf/pmf of $X$.

\begin{definition}[Adjacency]\label{def:adjacency}
	Given a metric $\dX$ on the space $\X$,
	we say that two inputs $x_1,x_2\in\X$ are \textit{adjacent (with respect to $\dX$)} if $\dX(x_1,x_2)\leq1$.
	We write this as $x_1\adj_{\dX} x_2$ (or $x_1 \adj x_2$ if the metric is unambiguous).
\end{definition}

\begin{definition}[Differential Privacy]
	\label{def:diffpriv}
	Let $\epsilon>0$ be a privacy parameter.
	A randomized mechanism $\M:\X\to\RandVars(\Z)$ fullfils \textit{$\epsilon$\-/\dlp} 
	if for any two adjacent inputs $x_1,x_2\in\X$, and any set of possible outputs $Z \subseteq \Ima(\M)$,
	\[
	\Pr[ \M(x_1) \in Z ] \leq e^\epsilon \cdot \Pr[ \M(x_2) \in Z ]. 
	\]
	The \textit{privacy loss} of a randomized mechanism $\M$ is the quantity
	\[
		\ell(\M) := \sup_{x_1 \adj x_2} \sup_{Z\in\Ima(\M)} \ln\frac{\pr{\M(x_1)\in Z}}{\pr{\M(x_2)\in Z}}.
	\]
\end{definition}
Note that $\epsilon$ is an upper bound for the privacy loss,
and hence any randomized mechanism $\M$ with finite privacy loss $\ell(\M)$
also fulfills $\epsilon$\-/\dlp with $\epsilon=\ell(\M)$.

Typically, the input space $\X$ models the set of databases
over some domain of values $\Universe$ with $n$ records, i.e. $\X=\Universe^n$.
In the case of textual documents, we adopt the vector space/\gls{BoW} model
where each document $x$ is represented as feature vector over some vocabulary $\Vocab$ of size $\VocSize$.
Since we anonymize each document independently, we assume $\X=\Z=\RealsGEQ0^\VocSize$.
We consider any two texts as adjacent which is the most strict and conservative way to define adjacency.

%
%



\TODO{verify per word or per doc privacy.
	explain advantage of per doc (useful in convex combination lemma, conservative),
	then still state per (output) word epsilon?}

\paragraph{The Exponential Mechanism}
A very important and versatile building block for differential privacy is the Exponential mechanism
by McSherry and Talwar \cite{mcsherry2007mechanism}.
It applies to both numerical and categorical data and fulfills $\epsilon$\-/\dlp
as shown in \cite[theorem 6]{mcsherry2007mechanism}.
It requires a \enquote{measure of suitability} for each possible pair of inputs and outputs:
\begin{definition}[Rating function and sensitivity]
	\label{def:qualityfunc}
	A function $\rating:\X\times\Z\to\Reals$ is called a \emph{rating function} from $\X$ to $\Z$.
	The value $\rating(x,z)$ is the \emph{rating} for input $x$ and output $z$.
	The \emph{sensitivity} $\Delta_\rating$ of the rating function $\rating$ is its largest possible difference given two adjacent inputs, over all possible output values:
	$$\Delta_\rating := \max_{z\in\Z}\ \max_{x_1 \adj x_2} \big( \rating(x_1,z)-\rating(x_2,z) \big)$$
\end{definition}
	
In our scenario with textual data, the rating function $\rating$ will be bounded to $[0,1]$,
which implies that its sensitivity is $\Delta_\rating \leq1$.

\begin{definition}[Exponential mechanism]
	\label{def:expmech}
	Let $\epsilon>0$ be a privacy parameter, and let $\rating:\X\times\Z\to\Reals$ be a rating function.
	For each $x\in\X$, we define a random variable $\Em(x)$ that is described by the probability density function (pdf)
	\[ 
		\Pr\big[ \Em(x)=z \big] = \smint.
	\]
	Note that a discrete version of the Exponential mechanism for countable $\Z$
	is obtained by replacing the integral with a sum.
\end{definition}



%
%


\section{Synthetic Term Frequency Vectors}
\label{sec:approach}

In this section, we first describe the intended usage scenario.
We then take a closer look under the hood of authorship attribution techniques
and derive the basic motivation behind our \algname method.
Finally, we describe our method in detail and present its \dlp properties.

\subsection{Usage Scenario} 
\label{sec:approach:scenario}

Consider a data processor that wishes to share sensitive training data
for machine learning with a third-party analyst.
Feature vectors are sufficient for most machine learning tasks
since they are produced by the analyst in a preprocessing step anyway.
Our method automatically creates anonymized feature vectors
that can be shared with the analyst and which he can use in lieu of his own vectors.

In our present scenario, we are given a set of text documents such as email messages,
job applications or survey results.
The documents shall be analyzed by a (benign) third-party analyst,
who wants to perform a typical text mining task such as text classification.
Our aim is to prevent authorship attribution attacks as described above. 
Therefore, to protect the identity of the authors and prevent re-identification,
we only provide the analyst with synthetic \gls{BoW} feature vectors instead of the original documents.
Email providers and search engines could share anonymized feature vectors
of emails or (aggregted) search queries with advertising networks
to provide personalized ads while protecting their users. 



\paragraph*{Attacker Model}
The attacker is presented with a document of unknown authorship
which has been written by one of several suspected authors.
Her goal is to identify the document's actual author from the group of suspects.
We assume that she has a set of similar reference documents from each suspect
that she can use to help decide which suspect to assign the unknown document to.

We compare the attacker's capability to re-identify the authors
on the original plaintexts as well as the anonymized feature vectors. 
We assume the attacker knows the dictionary,
so she can convert the numbers in the feature vectors
to a textual representation by repeating each word accordingly.
This allows her to (partially) deduce more complex features beyond \gls{BoW},
such as the \acrshort{WP} feature set which is often used in authorship attribution \cite{abbasi2008writeprints,mcdonald2012use}.
As explained in the next \cref{sec:prevent-aa},
most of these features cannot be correctly inferred anymore,
which is beneficial for our method as these are precisely the stylistic features
(beyond \gls{BoW}) that are exclusively exploited by our attacker.


\subsection{Preventing Authorship Attribution}
\label{sec:prevent-aa}
A popular feature set for authorship attribution has been described
in the WritePrints method \cite{abbasi2008writeprints}.
It includes the following types of stylistic features:
\begin{description}
	\item[Lexical] Counts of letters, digits, special characters, number of characters and words, etc.
	\item[Syntactic] Frequency of function words, punctuation, parts of speech (POS) tags.
	\item[Structural] Number and length of paragraphs and sentences, URLs or quoted content, etc.
	\item[Content] Frequencies of words (\gls{BoW} model).
	\item[Idiosyncratic] Misspelled words.
\end{description}
For some features such as letters, words, digits and POS tags, it also considers their bi- and trigrams,
thus taking \emph{order information} into account.
These features have a strong capability to capture individual stylistic characteristics
expressed by the writer of a text. For instance, one author might subconsciously
prefer using the passive voice or past tense, so many verbs will end in an ``ed''-bigram,
whereas another author might tend to use the present continuous or gerund
which causes many ``ing''-trigrams.

Ordinary text mining and \gls{IR} tasks such as classification typically only use content-level features
which are often modeled and represented as \glspl{tfvec} in the \gls{BoW} model.
Most of the stylistic features used for authorship attribution thus get \emph{lost in vectorization}:
In fact, the tf \emph{vectors} by their very nature
do not capture any structural information, and most syntactic features will be destroyed as well.
Apart from the content (and idiosyncratic) features, however, we can still derive lexical
features if the \gls{BoW} vocabulary is known.

Since the attacker can still exploit the derived lexical features,
we aim at disturbing them in a way that keeps the meaning or theme of a document intact,
thus further allowing the classification task but impairing authorship attribution.
Lexical features are mostly related with the spelling, therefore,
our idea is to replace words in the input with words with similar meaning (synonyms) but different spelling
to make the lexical features meaningless for the attacker.
On the other hand, this will preserve the general theme of the text,
so we hope that the impact is little on the classification task.

\subsection{The \algname Mechanism}
\label{sec:approach:dp-shuffle}

Our goal is a \dyp anonymization method to derive synthetic feature vectors
that keeps the theme of the represented document intact
and at the same time prevents authorship attribution attacks.
For performance and memory efficiency reasons,
we require our method to preserve the sparseness in the \glspl{tfvec}.
Simply applying Laplace noise \cite{dwork2006calibrating} or \dyp histogram publication methods \cite{xu2013differentially}
will fail this requirement, since they produce dense vectors.
%
Our core idea is to take a word count entry for one term in the \gls{tfvec}
and probabilistically distribute it across all terms in the pre-defined vocabulary.
The probability of each term is determined according to its \emph{similarity} with the original word.
Word similarity can be expressed in various ways, cf. \cref{sec:params:similarity-functions}. 
\TODO{More concise discussion of dp implications.}

Differential privacy presents a strong requirement for the method:
Namely, \emph{every} possible output must occur with non-zero possibility for \emph{any} other input.
This means that a statement on food preference can be processed
to the same output as a conversation on politics,
with non-zero probability. 
This has two implications:
First, we must ensure that the probability of picking a term
is always greater than zero, even for totally unrelated words.
Second, it must be possible that two input texts of different lengths
produce the same number of words in their resulting \glspl{tfvec}.
Therefore, we must also specify the output length.
Note that this approach limits the number of entries
that are changed from the original to the anonymized \gls{tfvec},
so it keeps the sparseness of the resulting vector intact.


\paragraph{Algorithm Description}
\label{sec:approach:dp-shuffle:algo}

In the following, let $V$ denote the underlying vocabulary of size $\abs{\Vocab}=\VocSize$.
The vocabulary could be derived, for instance, from a reference corpus of documents
from a similar context as the target documents which shall be anonymized.
We will describe the \algname approach for a single document $T$,
but it is possible to anonymize an entire corpus simultaneously.
The anonymization for a document $T$ consists of two main phases:
\begin{description}
	\item[Analysis]
	We \emph{vectorize} the document $T$ to its feature vector $\docvec=(\doc_1,\ldots,\doc_K)\in\RealsGEQ0^K$.
	Typically, $\docvec$ will be a tf or \acrshort{tfidfvec} over the underlying vocabulary $\Vocab$.
	Next, we \emph{normalize} $\docvec$ with respect to the $\ell_1$-norm
	to transform it into a \emph{composition vector} $\compvec_\docvec:=\docvec/\norm{\docvec}_1$
	whose entries can be interpreted as probability distribution over $\Vocab$.
	
	\item[Synthesis] 
	We repeatedly sample terms $v_1,\ldots,v_n$ from the distribution $\compvec_\docvec$ on $\Vocab$.
	For each $v_i$, we use the Exponential mechanism to pick a substitute output term $w_i\in\Vocab$
	with probability proportional to a \emph{similarity rating} $\rating(v_i,w_i)$.
	Finally, we construct a synthetic \gls{tfvec} $\mathbf{s}\in\Nats_{\geq0}^\VocSize$
	of length $n$ by counting all the terms $w_i$.
\end{description}

\Cref{alg:synth-tf} illustrates the synthesis phase of our \algname mechanism in pseudocode.
It uses the following definition:
\begin{definition}[Categorical distribution]
	\label{def:categorical}
	For an enumerable set $V=\set{v_1,\ldots,v_k}$ 
	and associated probability vector $\mathbf{p}=(p_v)_{v\in V}$ with $\sum_{v\in V} p_v=1$,
	the \emph{categorical distribution}, denoted $\Cat(\mathbf{p})$,
	is defined on $V$ through $\pr{\Cat(\mathbf{p})=v_i}=p_i$.
\end{definition}

\begin{algorithm}
	\KwIn{document vector $\compvec_\docvec$, desired output length $\reslen$,
		privacy parameter $\epsilon>0$, rating function $\rating:\Vocab\times\Vocab\to[0,1]$}
	\KwResult{synthetic \gls{tfvec} $\resvec\in\Nats^{\abs{\Vocab}}$ with $\abs{\resvec}=\reslen$} 
	\BlankLine
	\For(\tcp*[f]{produce output term-by-term}){$i \leftarrow 1$ \KwTo $n$}{
		$v_i \samplefrom \Cat(\compvec_\docvec)$\tcp*{sample word $v_i$}
		$w_i \samplefrom \Em(v_i)$\tcp*{choose synonym for $v_i$}
	}
	$\resvec \leftarrow \parens[\big]{\abs{\set{i\in\irange{n}:w_i=w}}}_{w\in\Vocab}$\tcp*{count synonyms}
	\caption{\algname Term-Frequency Vector Synthesis}
	\label{alg:synth-tf}
\end{algorithm}

\subsection{Differential Privacy Results}
\label{sec:dp-proofs}
\todo{Omit proofs.}

In this section, we will prove that our \algname mechanism fulfills \dlp (\cref{def:diffpriv}),
which amounts to deriving an upper bound $\epsilon$ on its privacy loss.

We keep the previous notation where $\Vocab$ is the vocabulary of size $\VocSize$,
$\docvec=(\doc_1,\ldots,\doc_K)$ is the tf or \acrshort{tfidfvec} of the target document to be anonymized,
and $\compvec_\docvec:=\docvec/\norm{\docvec}_1$ is the corresponding vector of probabilities.
For each pair of words $v,w\in\Vocab$, we have a similarity score $\rating(v,w) \in [0,1]$.
This score will be used in the Exponential mechanism,
which outputs $w$ on input $v$ with probability
\[
	\pi_{v,w} := \pr{\Em(v) = w} = \smsum[\epsilon][\rating][v][w].
\]
Note that we assume that \textit{all} potential inputs are adjacent
which is a very conservative interpretation of \dlp. 
This is used in the following lemma, which presents a (niched) counterpart
to the known postprocessing lemma \cite[proposition 2.1]{dwork2014algorithmic},
to show that a convex combination of an $\epsilon$\-/\dyp algorithm is again $\epsilon$\-/\dyp.
\begin{lemma}[Randomized Preprocessing]
	\label{lemma:preprocessing}
	Given two independent randomized mechanisms
	$\A:\X\to\RandVars(\Y)$ and $\B:\Y\to\RandVars(\Z)$,
	we define their functional composition $\B\circ\A: \X\to\RandVars(\Z)$
	as first sampling from $\A$ and using the resulting sample as input for $\B$.
	The composition $\B\circ\A$ is $\epsilon$\-/\dyp provided that $\B$ is $\epsilon$\-/\dyp
	where \emph{all} inputs $y,y'\in\Y$ to $\B$ are considered adjacent
	(that is, $\dist_\Y(y,y')\leq1$). 
\end{lemma}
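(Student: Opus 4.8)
The plan is to reduce the statement to the elementary fact that a convex combination of numbers lying in an interval $[m,M]$ again lies in $[m,M]$, and then to exploit that the differential privacy of $\B$ forces $M \leq e^\epsilon m$. Note that $\A$ itself need not satisfy any privacy guarantee: the whole argument will rest on $\B$ treating \emph{all} of its inputs as adjacent.

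First I would express the output distribution of the composition by marginalizing over the intermediate value. Fix any measurable $Z \subseteq \Z$ and define $q(y) := \pr{\B(y) \in Z}$. Since $\A$ and $\B$ are independent, conditioning on the sampled value $y \samplefrom \A(x)$ and applying the law of total probability gives, for every input $x \in \X$,
\[
	\pr{(\B\circ\A)(x) \in Z} = \int_\Y \pr{\A(x)=y}\, q(y)\, dy,
\]
with the integral replaced by a sum when $\Y$ is discrete, as in our application where $\A$ is the categorical sampling step. The right-hand side is precisely a convex combination of the values $q(y)$, with nonnegative weights $\pr{\A(x)=y}$ that integrate to $1$.

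Next I would convert the privacy guarantee of $\B$ into a uniform two-sided bound on $q$. Because every pair $y,y' \in \Y$ is adjacent and $\B$ is $\epsilon$\-/\dyp, we have $q(y) \leq e^\epsilon q(y')$ for all $y,y'$. Taking the supremum over $y$ and then the infimum over $y'$ yields $M \leq e^\epsilon m$, where $m := \inf_{y} q(y)$ and $M := \sup_{y} q(y)$.

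Finally I would sandwich the two composed distributions. For any two inputs $x_1, x_2 \in \X$, the average $\pr{(\B\circ\A)(x_1)\in Z}$ is bounded above by $M$ (an average of values $\leq M$), while $\pr{(\B\circ\A)(x_2)\in Z}$ is bounded below by $m$ (an average of values $\geq m$). Chaining these with $M \leq e^\epsilon m$ gives
\[
	\pr{(\B\circ\A)(x_1)\in Z} \leq M \leq e^\epsilon m \leq e^\epsilon \pr{(\B\circ\A)(x_2)\in Z},
\]
which is exactly the $\epsilon$\-/\dlp condition for $\B\circ\A$; since $Z$ was arbitrary, this finishes the argument. I expect the only real obstacle to be the measure-theoretic justification of the marginalization step for general (continuous) $\Y$ — in particular verifying that $y \mapsto q(y)$ is measurable and that the conditional probability $\pr{\B(y)\in Z}$ may legitimately be substituted inside the integral — together with gracefully handling the degenerate case $m=0$, where the bound forces $M=0$, so both composed probabilities vanish and the inequality holds trivially.
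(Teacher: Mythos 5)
Your proposal is correct and follows essentially the same route as the paper's proof: marginalize over the intermediate value to write the composed output probability as a convex combination of $\pr{\B(y)\in Z}$, then bound that convex combination between its infimum and supremum over $y$, which differ by at most a factor $e^\epsilon$ because all pairs $y,y'$ are adjacent for $\B$. The only differences are cosmetic (you phrase it for general events $Z$ with $\sup/\inf$ and explicitly handle the degenerate case $m=0$, whereas the paper works pointwise with $\max/\min$ of the output pmf).
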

\begin{proof}
	Define $\alpha_{x,y} := \pr{\A(x)=y}$ and $\beta_{y,z} := \pr{\B(y)=z}$ for all $x\in\X,y\in\Y,z\in\Z$.
	Then
	\begin{align*}
		\Pr\brackets*{\parens{\B \circ \A}(x) = z}
		&= \Pr\parens*{\bigsqcup_{y\in\Y} \brackets{\A(x)=y} \land \brackets{\B(y)=z}} \\
		&= \sum_{y\in\Y} \pr{\A(x)=y} \cdot \pr{\B(y)=z} \\
		&= \sum_{y\in\Y} \alpha_{x,y} \cdot \beta_{y,z}.
	\end{align*}
	The first equality stems from enumerating, over $y\in\Y$,
	all possible ways to get output $z$ on input $x$.
	The second equality is due to the fact that these possibilities are disjoint,
	and uses the independence between the two randomized mechanisms.
	
	Fix any adjacent $x_1,x_2\in\X$ and $z\in\Z$ and define the quantities
	$\hat\beta_z:=\max_{y\in\Y}\beta_{y,z}$ and $\check\beta_z :=\min_{y\in\Y}\beta_{y,z}$.
	Now
	\begin{align*}
		\frac{ \Pr\brackets*{\parens{\B \circ \A}(x_1) = z} }{ \Pr\brackets*{\parens{\B \circ \A}(x_2) = z} }
		= \frac{ \sum_{y} \alpha_{x_1,y} \beta_{y,z} }{ \sum_{y} \alpha_{x_2,y} \beta_{y,z} }
		\leq \frac{ \sum_{y} \alpha_{x_1,y} \hat\beta_z }{ \sum_{y} \alpha_{x_2,y} \check\beta_z }
		= \frac{\hat\beta_z}{\check\beta_z}
		\leq e^\epsilon,
	\end{align*}
	since the sums are convex combinations of $\beta_{y,z}$
	and since both values of $y\in\Y$ that maximize/minimize $\beta_{y,z}$ are adjacent.
\end{proof}

We now show that \cref{alg:synth-tf} is \dyp:
\begin{theorem}[Differential Privacy of \algname]
	\label{thm:shuffle-dp}
	Given a privacy parameter $\epsilon>0$ and an output length $\reslen\in\Nats$,
	our \algname mechanism (\cref{alg:synth-tf}) fulfills $\epsilon \reslen$-differential privacy.
	\qed
\end{theorem}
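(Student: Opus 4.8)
The plan is to split the analysis of \algname into the privacy of a single pass through the body of the for-loop in \cref{alg:synth-tf} and the effect of repeating that pass $\reslen$ times; the repetition is exactly what will turn a per-iteration bound of $\epsilon$ into the overall bound $\epsilon\reslen$. First I would isolate one iteration, which on input distribution $\compvec_\docvec$ samples a word $v\samplefrom\Cat(\compvec_\docvec)$ and then outputs a synonym $w\samplefrom\Em(v)$. This is precisely the functional composition $\Em\circ\Cat$ in the sense of \cref{lemma:preprocessing}, with the input-dependent sampler $\Cat$ as the preprocessing $\A$ and the Exponential mechanism $\Em$ as $\B$. By \cite[theorem 6]{mcsherry2007mechanism} the Exponential mechanism is $\epsilon$\-/\dyp, and since we treat \emph{all} inputs as adjacent, the bound $\rating\in[0,1]$ forces $\Delta_\rating\leq1$; hence $\Em$ satisfies the hypothesis of \cref{lemma:preprocessing} with every pair of words $v,v'\in\Vocab$ considered adjacent. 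The lemma then yields that one iteration is $\epsilon$\-/\dyp with respect to \emph{any} two input documents. Concretely, the per-word output probability for a document is $p_w=\sum_{v\in\Vocab}\comp_{\doc,v}\,\pi_{v,w}$, and for two adjacent documents with probabilities $p_w$ and $p'_w$ this says $p_w/p'_w\leq e^\epsilon$ for every $w\in\Vocab$.

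Next I would account for the repetition and the final tally. The $\reslen$ output words are drawn independently from the same distribution $(p_w)_{w\in\Vocab}$ and then counted, so $\resvec$ is multinomially distributed with $\reslen$ trials and pmf $\tfrac{\reslen!}{\prod_{w}\res_w!}\prod_{w\in\Vocab}p_w^{\res_w}$. For two adjacent documents the multinomial coefficient is identical and cancels in the ratio, leaving $\prod_{w\in\Vocab}(p_w/p'_w)^{\res_w}$; bounding each factor by $e^\epsilon$ and using $\sum_{w\in\Vocab}\res_w=\reslen$ gives $\prod_{w}(e^\epsilon)^{\res_w}=e^{\epsilon\reslen}$, which is the claimed $\epsilon\reslen$\-/\dlp guarantee. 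Equivalently, one can observe that the length-$\reslen$ word sequence is $\epsilon\reslen$\-/\dyp by multiplying the $\reslen$ independent single-iteration ratios, and then invoke ordinary post-processing \cite[proposition 2.1]{dwork2014algorithmic}, since $\resvec$ is a deterministic function of that sequence.

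I expect the single-iteration step to be the real content, not the composition. The crux is recognizing that the document enters a pass only through the sampler $\Cat$ and never through $\Em$ itself, so the privacy loss of one pass is controlled entirely by the Exponential mechanism acting on \emph{adjacent} words; this is exactly why the specialized \cref{lemma:preprocessing} (rather than the usual post-processing lemma) is required, and why the conservative convention that all inputs are adjacent is indispensable. Once that is in place, the factor $\reslen$ follows routinely from the $\reslen$ independent repetitions, the only thing to verify being that the multinomial normalizing coefficient cancels so that no leftover dependence on the counts $\res_w$ inflates the bound.
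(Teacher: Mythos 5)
Your proposal is correct and follows essentially the same route as the paper: a single iteration is handled by \cref{lemma:preprocessing} applied to the composition of the categorical sampler with the Exponential mechanism (with all words adjacent), and the factor $\reslen$ comes from repeating that step. The only cosmetic difference is that you verify the $\reslen$-fold composition by hand via the multinomial pmf ratio, where the paper simply cites the sequential composition theorem \cite[theorem 3.16]{dwork2014algorithmic} followed by post-processing; both are valid.
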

\begin{proof}
	Each iteration (the body of the \texttt{for}-loop) consists of two steps: 
	First, our algorithm samples one word $v$ according to the probabilities in $\compvec_\docvec$,
	which can be thought of running a randomized mechanism $\A$
	with the underlying categorical distribution.
	Second, it substitutes $v$ with another word $w\in\Vocab$
	according to their similarity using the Exponential mechanism $\Em$,
	which provides $\epsilon$\-/\dlp.
	By the preceding \cref{lemma:preprocessing}, both steps combined are $\epsilon$\-/\dyp.
	Since we iterate $\reslen$ times, the sequential composition theorem
	\cite[theorem 3.16]{dwork2014algorithmic}
	yields $\epsilon \reslen$\-/\dlp for the entire \texttt{for}-loop.
	Aggregating the synonym counts is a simple postprocessing step
	which keeps the privacy loss unchanged \cite[proposition 2.1]{dwork2014algorithmic},
	and we hence achieve $\epsilon \reslen$\-/\dlp for \cref{alg:synth-tf}.
\end{proof}

\subsubsection{Alternative Bound for the Exponential Mechanism}
We can derive an alternative bound for the privacy loss of the Exponential mechanism
by also considering the maximum change across \textit{all outputs} for \textit{fixed inputs}
(in contrast to the sensitivity which tracks the maximum change
across \textit{adjacent inputs} for \textit{fixed outputs}):

\begin{theorem}[Alternative bound] 
	\label{thm:shuffle-dp:alternate}
	Let $\epsilon>0$ be a privacy parameter
	and $\rating:\X\times\Z\to\Reals$ be a rating function with sensitivity $\Delta$
	where $\abs{\Z}=L$. 
	Let $$\bar\Delta := \max_{x\in\X} \max_{z,z'\in\Z} \abs{\rating(x,z)-\rating(x,z')}.$$
	Then the privacy loss $\ell(\Em)$ is bounded by $(\bar\epsilon + \ln\eta)$,
	where
	\begin{equation*}
		\bar{\epsilon} := \epsilon\frac{\bar\Delta}{\Delta}
		\quad\text{and}\quad
		\eta = \eta(\bar\epsilon,L)
		= \frac{e^{-\bar{\epsilon}/2}+L-1}{e^{\bar{\epsilon}/2}+L-1}
		<1.\qed
	\end{equation*}
\end{theorem}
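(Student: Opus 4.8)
The plan is to bound the privacy loss $\ell(\Em)$ at the level of a single output and then lift it to arbitrary output sets, as in the textbook analysis of the Exponential mechanism. Writing $\pi_{x,z}:=\pr{\Em(x)=z}=\smsum$ and abbreviating the exponent $q(x,z):=\frac{\epsilon}{2\Delta}\rating(x,z)$, the crucial observation is that for a \emph{fixed} input $x$ all $L$ exponents $q(x,z)$ lie in a band of width $\frac{\epsilon}{2\Delta}\bar\Delta=\bar\epsilon/2$, since $\bar\Delta$ is exactly the maximal spread of $\rating(x,\cdot)$ over the outputs. Unlike the standard proof, which factors $\pi_{x_1,z}/\pi_{x_2,z}$ into a numerator ratio and a normalizer ratio and bounds each via the sensitivity $\Delta$, I would instead bound the single quantity $\pi_{x,z}$ from above and below \emph{uniformly in $x$}, using only this spread.

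For the upper bound I would lower-bound the normalizer: every competing exponent satisfies $q(x,z')\ge q(x,z)-\bar\epsilon/2$, so the $L-1$ terms other than $z$ contribute at least $(L-1)e^{-\bar\epsilon/2}e^{q(x,z)}$, giving
\[
\pi_{x,z}\;\le\;\frac{1}{1+(L-1)e^{-\bar\epsilon/2}}\;=\;\frac{e^{\bar\epsilon/2}}{e^{\bar\epsilon/2}+L-1}.
\]
Symmetrically, using $q(x,z')\le q(x,z)+\bar\epsilon/2$ to upper-bound the normalizer yields
\[
\pi_{x,z}\;\ge\;\frac{1}{1+(L-1)e^{\bar\epsilon/2}}\;=\;\frac{e^{-\bar\epsilon/2}}{e^{-\bar\epsilon/2}+L-1}.
\]
Both bounds are independent of $x$ and of $z$, so for any two (adjacent) inputs $x_1,x_2$ and any fixed output $z$,
\[
\frac{\pi_{x_1,z}}{\pi_{x_2,z}}\;\le\;\frac{e^{\bar\epsilon/2}/(e^{\bar\epsilon/2}+L-1)}{e^{-\bar\epsilon/2}/(e^{-\bar\epsilon/2}+L-1)}\;=\;e^{\bar\epsilon}\cdot\frac{e^{-\bar\epsilon/2}+L-1}{e^{\bar\epsilon/2}+L-1}\;=\;e^{\bar\epsilon}\,\eta.
\]

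To finish, I would take logarithms to obtain $\ln(\pi_{x_1,z}/\pi_{x_2,z})\le\bar\epsilon+\ln\eta$ for every single output, and then extend to an arbitrary set $Z$: since the per-output inequality $\pi_{x_1,z}\le e^{\bar\epsilon}\eta\,\pi_{x_2,z}$ holds for each $z$, summing over $z\in Z$ preserves it, whence $\pr{\Em(x_1)\in Z}\le e^{\bar\epsilon}\eta\,\pr{\Em(x_2)\in Z}$ and therefore $\ell(\Em)\le\bar\epsilon+\ln\eta$. Finally $\eta<1$ is immediate, as $e^{-\bar\epsilon/2}<e^{\bar\epsilon/2}$ for $\bar\epsilon>0$.

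I expect the main difficulty to be conceptual rather than computational: recognizing that the cross-input sensitivity constraint need not be invoked at all, and that one should instead bound $\pi_{x,z}$ itself uniformly through the output spread $\bar\Delta$ — the ``alternative'' the theorem advertises. The only point requiring care is that both bounds on $\pi_{x,z}$ are genuinely independent of $x$ and $z$, which is precisely what makes the worst-case ratio and the summation over $z\in Z$ go through; a quick check that the configuration placing one output at the top of the band and the remaining $L-1$ at the bottom (with the roles reversed for the second input) attains $e^{\bar\epsilon}\eta$ confirms that the bound is in fact tight.
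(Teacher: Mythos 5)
Your proposal is correct and is essentially the paper's own argument in different clothing: since $\sum_{z'}e^{q(x,z')-q(x,z)}=1/\pi_{x,z}$, your uniform upper and lower bounds on $\pi_{x,z}$ are exactly the paper's bounds on the two normalizer sums in the ratio $\pi_{x_1,z}/\pi_{x_2,z}$, and both yield the same constant $e^{\bar\epsilon}\eta$. The only additions are the (routine) summation over $z\in Z$ and the tightness remark, neither of which changes the approach.
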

\begin{proof}
	For any $x\in\X$ and $z\in\Z$, denote by $$\pi_{x,z} := \pr{\Em(x) = z} = \smsum[\epsilon][\rating][x][z]$$
	the probabilities that $\Em$ outputs $z$ on input $x$.
	Then for adjacent $x_1,x_2\in\X$ and any fixed $z\in\Z$, we bound
	\begin{align*}
		\frac{\pi_{x_1,z}}{\pi_{x_2,z}}
		&= \smsum[\epsilon][\rating][x_1][z]\cdot \parens*{\smsum[\epsilon][\rating][x_2][z]}^{-1}\\
		&= \frac
		{ \sum_{z'} \exp\parens[\big]{\frac{\epsilon}{2\Delta} \brackets*{\rating(x_2,z')-\rating(x_2,z)}} }
		{ \sum_{z'} \exp\parens[\big]{\frac{\epsilon}{2\Delta} \brackets*{\rating(x_1,z')-\rating(x_1,z)}} } \\
		&\leq \frac
		{ 1 + \sum_{z'\neq z} \exp\parens[\big]{\frac{\epsilon}{2\Delta} \brackets{\overbracket{\rating(x_2,z')-\rating(x_2,z)}^{\leq\bar{\Delta}}}} }
		{ 1 + \sum_{z'\neq z} \exp\parens[\big]{\frac{\epsilon}{2\Delta} \brackets{\underbracket{\rating(x_1,z')-\rating(x_1,z)}_{\geq-\bar{\Delta}}}} } \\
		&\leq \frac
		{ 1 + (L-1) \exp\parens[\big]{\frac{\bar{\epsilon}}{2}} }
		{ 1 + (L-1) \exp\parens[\big]{-\frac{\bar{\epsilon}}{2}} }
		= e^{\bar{\epsilon}} \cdot \underbracket{\frac
		{ e^{-\bar{\epsilon}/2} + L-1 }
		{ e^{\bar{\epsilon}/2} + L-1 } }_{=:\eta < 1}.
	\end{align*}
	The result follows by taking logarithms and observing that
	the numerator for $\eta$ is strictly smaller than its denominator.
\end{proof}

Note that in general cases we normally have $\bar\Delta > \Delta$
since the sensitivity $\Delta$ is restricted to \textit{adjacent} inputs.
The growth due to the factor $\bar{\Delta}/\Delta$ in $\bar{\epsilon}=\epsilon\bar{\Delta}/\Delta$
would therefore typically exceed the savings due to $\ln\eta<0$,
so the alternate bound $\bar{\epsilon}+\ln\eta$ would be \textit{worse}
than the original bound $\epsilon$ as derived in the standard \dlp proof
for the Exponential mechanism \cite{mcsherry2007mechanism}.
However, if we consider all inputs as adjacent, and if $\rating$ is symmetric in its arguments,
then we will have $\bar\Delta=\Delta$ and $\bar{\epsilon}=\epsilon$,
and thus the factor $\eta<1$ will provide a real improvement over the original bound.
This is the case in our algorithm:
\begin{corollary}[Improved \dlp bound]
	\label{cor:shuffle-improved}
	Given a privacy parameter $\epsilon>0$ and an output length $\reslen\in\Nats$,
	our \algname mechanism 
	fulfills $\parens*{(\epsilon+\ln\eta(\epsilon,\VocSize))\cdot\reslen}$\-/\dlp.
	\qed
\end{corollary}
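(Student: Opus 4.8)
The plan is to recycle the proof of \cref{thm:shuffle-dp} almost verbatim, replacing the per-iteration privacy bound of $\epsilon$ by the sharper bound $\epsilon+\ln\eta(\epsilon,\VocSize)$ that \cref{thm:shuffle-dp:alternate} provides. The only genuine content is to check that, in the concrete \algname setting, the alternative bound specializes to a real improvement rather than a degradation.

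First I would argue that $\bar\Delta=\Delta$ here. In \cref{alg:synth-tf} the Exponential mechanism operates on $\X=\Z=\Vocab$ with the symmetric similarity rating $\rating(v,w)=\rating(w,v)$, and by our conservative convention \emph{all} words $v_1,v_2\in\Vocab$ are adjacent. Hence $\Delta=\max_{w}\max_{v_1\adj v_2}\parens{\rating(v_1,w)-\rating(v_2,w)}$ ranges over all pairs of inputs for a fixed output, while $\bar\Delta=\max_{v}\max_{w_1,w_2}\abs{\rating(v,w_1)-\rating(v,w_2)}$ ranges over all pairs of outputs for a fixed input. Swapping the roles of the two arguments and invoking symmetry identifies these two maxima, so $\bar\Delta=\Delta$ and therefore $\bar\epsilon=\epsilon\bar\Delta/\Delta=\epsilon$. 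Plugging $\bar\epsilon=\epsilon$ and $L=\VocSize$ into \cref{thm:shuffle-dp:alternate} yields $\ell(\Em)\le\epsilon+\ln\eta(\epsilon,\VocSize)$; since the privacy loss is an admissible privacy parameter (the note following \cref{def:diffpriv}), $\Em$ is $\parens{\epsilon+\ln\eta(\epsilon,\VocSize)}$\-/\dyp, with all word inputs adjacent.

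From here I would repeat the composition argument of \cref{thm:shuffle-dp}. Each loop iteration is the functional composition $\Em\circ\A$ of the categorical sampling step $\A$ with the Exponential mechanism; since all inputs to $\Em$ are adjacent, \cref{lemma:preprocessing} transfers the guarantee to the composition, making a single iteration $\parens{\epsilon+\ln\eta(\epsilon,\VocSize)}$\-/\dyp. Sequential composition \cite[theorem 3.16]{dwork2014algorithmic} over the $\reslen$ iterations multiplies the bound by $\reslen$, and the final synonym-counting is postprocessing \cite[proposition 2.1]{dwork2014algorithmic} that leaves the privacy loss untouched, giving $\parens*{(\epsilon+\ln\eta(\epsilon,\VocSize))\cdot\reslen}$\-/\dlp for \cref{alg:synth-tf}. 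The main obstacle, and really the only nontrivial point, is the equality $\bar\Delta=\Delta$: without symmetry of $\rating$ or without treating all inputs as adjacent one would generically have $\bar\Delta>\Delta$, inflating $\bar\epsilon$ so much that the factor $\ln\eta<0$ can no longer compensate (as the paragraph preceding the corollary observes). It is precisely the symmetry and the conservative all-adjacent convention that pin $\bar\epsilon$ down to $\epsilon$ and turn the $\ln\eta$ term into a strict saving over \cref{thm:shuffle-dp}.
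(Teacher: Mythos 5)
Your proposal is correct and follows essentially the same route as the paper: the paper's proof simply reuses the argument of \cref{thm:shuffle-dp} with the per-iteration bound replaced by the $(\epsilon+\ln\eta(\epsilon,\VocSize))$ guarantee from \cref{thm:shuffle-dp:alternate}, and the observation that symmetry of $\rating$ together with the all-adjacent convention forces $\bar\Delta=\Delta$ (hence $\bar\epsilon=\epsilon$) is exactly the justification the paper gives in the paragraph preceding the corollary, which you have merely spelled out in more detail.
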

\begin{proof}
	The proof is identical to that of \cref{thm:shuffle-dp}
	with the improvement that the Exponential mechanism provides
	$(\epsilon+\ln\eta(\epsilon,\VocSize))$\-/\dlp.
\end{proof}

We illustrate the effects of the factor $\eta(\epsilon,\VocSize)$ in \cref{fig:theory:improved-bound}:
The original upper bound $\epsilon$ is the black dotted line on top,
the other lines show the improved upper bound $\epsilon+\ln\eta$
for different values of $\VocSize\in\set{2,100,30000}$.
30000 is approximately the size of the vocabulary in some of our experiments,
The effect of the improved bound increases with the privacy parameter $\epsilon$,
whereas large output spaces have a smoothing effect that dampens the improvement.



\begin{figure}
	\centering
	\includegraphics[width=0.83\columnwidth]{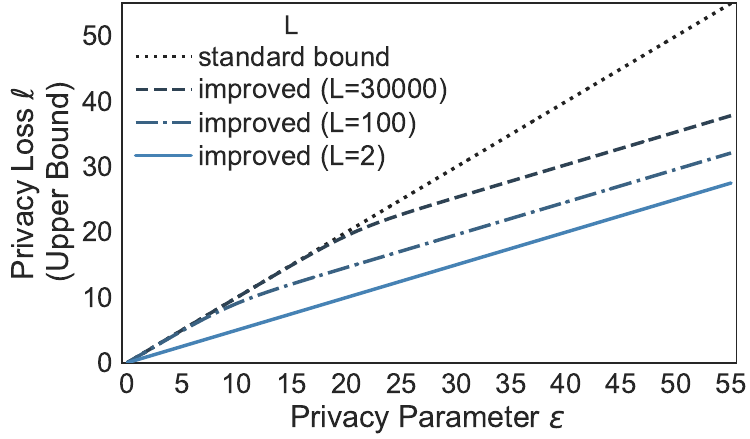}
	\caption{Standard and alternative upper bound $\epsilon+\ln\eta$ for the privacy loss $\ell(\Em)$
		given different output space sizes $L$.}
	\label{fig:theory:improved-bound}
\end{figure}

\subsubsection{Tight Worst-Case Bounds} 
\label{sec:shuffle:tight-bounds}
A major factor in the \dlp proof of \cref{thm:shuffle-dp:alternate} and \cref{cor:shuffle-improved}
consists of bounding the privacy loss $\ell(\Em)$ for the Exponential mechanism used in \cref{alg:synth-tf}.
This privacy loss is defined as smallest upper bound for the fractions ${\pi_{v_1,w}}/{\pi_{v_2,w}}$,
where $\pi_{v,w}\propto\smexp[\epsilon][\rating][v][w]$ are the the associated probabilities.
The probabilities $\pi_{v,w}$ depend on the underlying vocabulary $\Vocab$,
the rating function $\rating$, and the privacy parameter $\epsilon$,
but do not take the documents $\docvec$ and $\docvec'$ into account.
Therefore, we can compute the privacy loss
\begin{equation}\label{eq:tight-loss}
	\ell(\Em)= \max_{w\in\Vocab} \frac{\max_{v\in\Vocab}\pi_{v,w}}{\min_{v\in\Vocab}\pi_{v,w}}
\end{equation}
in advance and independently
from any documents to be anon\-y\-mized once the parameters $\Vocab$, $\rating$,
and $\epsilon$ have been determined.
Our \algname method with privacy parameter $\epsilon$ and output length $n$
thus in fact fulfills $\ell n$- instead of $\epsilon n$\-/\dlp where $\ell=\ell(\Em)$
is the privacy loss of the Exponential mechanism.
This turns out to lead to huge gains in practice, reducing the privacy loss upper bound
by almost 50\% in our experiments (cf. \cref{sec:results:evaluation}).

To see that these bounds are tight,
note that we can craft two input documents $t_1$ and $t_2$
that each consist of only a single word $v_1$ and $v_2$, respectively,
where $v_1$ and $v_2$ are precisely those that maximize the fraction
$\frac{\pi_{v_1,w}}{\pi_{v_2,w}}$ in \cref{eq:tight-loss}
for the optimal $w\in\Vocab$.

\subsubsection{Relationship between Utility and Privacy Loss}
\label{sec:results:necessary}

We present the following theoretical results for the Exponential mechanism
which suggest that in order to get ``useful'' outputs with a large output space,
we need to choose a large privacy parameter $\epsilon$ in the order of $\ln\abs\Z$,
under the assumption that there are only few good outputs for each input.


\begin{theorem}[Upper and Lower Bounds for Utility]
	\label{thm:utility-bounds}
	Let $\rating:\X\times\Z\to\Reals$ be a rating function with sensitivity $\Delta$,
	and $\abs\Z \in \Nats$,
	Take any fixed $x\in\X$ and denote by $\hat{\rating}_x$ and $\check{\rating}_x$ 
	the maximum and minimum rating scores of any output for $x$, respectively.
	For a desired minimum rating $\tau\in[\check{\rating}_x,\hat{\rating}_x]$,
	split $\Z$ into $\T:=\set{z\in\Z:\rating(x,z)\geq \tau}$ and $\overline{\T}:=\Z\setminus\T$.
	Then the probability $\Pr\brackets*{\Em(x)\in\T}$ that the Exponential mechanism 
	yields an element 
	with score at least $\tau$ has lower and upper bounds 
	\begin{equation*}
	\frac{\abs{\T}}{\abs{\T}+\abs{\overline{\T}}\exp\parens[\big]{-\frac{\epsilon c}{2\Delta}}}
	\leq
	\Pr\brackets*{\Em(x)\in\T}
	\leq
	\frac{\abs{\T}}{\abs{\T}+\abs{\overline{\T}}\exp\parens[\big]{-\frac{\epsilon \bar{\Delta}}{2\Delta}}},
	\end{equation*}
	where $c:=\tau - \max_{z\in\overline{\T}} \rating(x,z)$
	is the difference between $\tau$ and the next lower rating score,
	and $\bar{\Delta}:=\hat{\rating}_x - \check{\rating}_x$.
\end{theorem}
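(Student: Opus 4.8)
The plan is to write the target probability explicitly from the discrete Exponential mechanism of \cref{def:expmech} and then collapse the two\-/sided bound to the control of a single ratio. Since the normalizing denominator $\sum_{z'\in\Z}\exp(\frac{\epsilon}{2\Delta}\rating(x,z'))$ is common to every output probability, it cancels in the event probability, leaving
\[
	\pr{\Em(x)\in\T} = \frac{\sum_{z\in\T}\exp\!\left(\frac{\epsilon}{2\Delta}\rating(x,z)\right)}{\sum_{z\in\Z}\exp\!\left(\frac{\epsilon}{2\Delta}\rating(x,z)\right)}.
\]
Splitting the denominator over $\T$ and $\overline{\T}$ and dividing numerator and denominator by the $\T$\-/sum, I would rewrite this as $1/(1+r)$, where $r$ is the ratio of the total exponential mass on $\overline{\T}$ to that on $\T$. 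Because $t\mapsto 1/(1+t)$ is strictly decreasing, a lower bound on $\pr{\Em(x)\in\T}$ is equivalent to an upper bound on $r$, and an upper bound on the probability to a lower bound on $r$; both desired estimates then fall out by substituting the resulting bounds on $r$ back into $1/(1+r)$.

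Everything now reduces to bounding $r$ in both directions, using that every $z\in\T$ has $\rating(x,z)\geq\tau$ while every $z\in\overline{\T}$ has $\rating(x,z)\leq\tau-c$ by the definition $c=\tau-\max_{z\in\overline{\T}}\rating(x,z)$. For the \emph{upper} bound on $r$ (hence the lower bound on the probability) I replace each exponent in the $\overline{\T}$\-/sum by its largest admissible value $\tau-c$ and each exponent in the $\T$\-/sum by the smallest, $\tau$, so that $r\leq\frac{\abs{\overline{\T}}}{\abs{\T}}\exp(-\frac{\epsilon c}{2\Delta})$. For the \emph{lower} bound on $r$ I instead use only the crude global range, $\rating(x,z)\geq\check{\rating}_x$ on $\overline{\T}$ and $\rating(x,z)\leq\hat{\rating}_x$ on $\T$, giving $r\geq\frac{\abs{\overline{\T}}}{\abs{\T}}\exp(-\frac{\epsilon\bar\Delta}{2\Delta})$ with $\bar\Delta=\hat{\rating}_x-\check{\rating}_x$. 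Inserting these two estimates into $1/(1+r)$ produces exactly the stated lower and upper bounds.

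The computation itself is elementary, so I expect the only real obstacle to be bookkeeping rather than any deep argument. Two points deserve care: first, the orientation of the inequalities when passing through the decreasing map $t\mapsto 1/(1+t)$, since an upper bound on $r$ yields a lower bound on the probability and vice versa; and second, the \emph{asymmetry} of the two estimates — the lower bound must exploit the tight gap $c$ between $\tau$ and the best score in $\overline{\T}$, whereas the upper bound can only afford the full spread $\bar\Delta$ between the global extreme scores. I would also dispatch the trivial edge cases: since $\tau\leq\hat{\rating}_x$ the score\-/maximizing output lies in $\T$, so $\abs{\T}\geq1$ and the divisions are legitimate, and if $\overline{\T}=\emptyset$ then $r=0$ and both bounds collapse to $1$, consistent with $\pr{\Em(x)\in\T}=1$. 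No concentration inequality or limiting argument is required.
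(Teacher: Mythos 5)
Your proposal is correct and takes essentially the same route as the paper's proof: the paper likewise reduces everything to the ratio $r$ of exponential mass on $\overline{\T}$ to that on $\T$ (writing $\pr{\Em(x)\in\T}\inv = 1+r$), bounds $r$ above termwise using $\tau-c$ on $\overline{\T}$ and $\tau$ on $\T$ for the lower bound, and symmetrically below using $\check{\rating}_x$ and $\hat{\rating}_x$ for the upper bound. Your added edge-case remarks (that $\abs{\T}\geq1$ and the $\overline{\T}=\emptyset$ degeneration) are fine and merely make explicit what the paper leaves implicit.
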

\begin{proof}
	For the lower bound, consider the inverse probability
	\begin{align*}
	\Pr\brackets*{\Em(x)\in\T}\inv 
	&= \frac{\sum_{z\in\Z} \smexp}{\sum_{z\in\T} \smexp} \\
	&= 1 + \frac{\sum_{z\in\overline{\T}} \smexp}{\sum_{z\in\T} \smexp} \\
	&\leq 1 + \frac
	{\abs{\overline{\T}}\exp\parens*{\frac{\epsilon}{2\Delta}(\tau-c)}}
	{\abs{\T}\exp\parens*{\frac{\epsilon}{2\Delta}\tau}} \\
	&\leq 1 + \frac{\abs{\overline{\T}}}{\abs{\T}} \exp\parens*{-\frac{\epsilon c}{2\Delta}}.
	\end{align*}
	The upper bound is derived similarly.
\end{proof}

Solving for $\epsilon$, these bounds lead to the following corollary:
\begin{corollary}[Necessary and Sufficient Conditions on $\epsilon$]
	Given a probability $p\in[0,1]$, and with 
	the notation from \cref{thm:utility-bounds},
	we have the following necessary and sufficient conditions on $\epsilon$
	for $\pr{\Em(x)\in\T} \geq p$:
	\begin{equation*}
		\epsilon \geq
		\begin{cases}
			\frac{2\Delta}{\bar{\Delta}} \ln\parens[\Big]{\frac{p}{1-p} \cdot \frac{\abs{\overline{\T}}}{\abs{\T}}}
				& \text{(necessary condition)} \\
			\frac{2\Delta}{c} \ln\parens[\Big]{\frac{p}{1-p} \cdot \frac{\abs{\overline{\T}}}{\abs{\T}}}
				& \text{(sufficient condition)}
		\end{cases}
	\end{equation*}
	\qed
\end{corollary}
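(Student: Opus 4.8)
The plan is to read both conditions straight off the two\-/sided estimate of \cref{thm:utility-bounds} by solving each side for $\epsilon$, taking care to pair the \emph{necessary} direction with the \emph{upper} bound and the \emph{sufficient} direction with the \emph{lower} bound. The whole statement is essentially an inversion of a monotone function, so there is no genuinely hard step; the work is in setting up the logic and doing one clean algebraic isolation.

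First I would handle the necessary condition by the sandwich. Since $\pr{\Em(x)\in\T}$ never exceeds its upper bound, any $\epsilon$ that achieves $\pr{\Em(x)\in\T}\geq p$ must in particular force that upper bound above $p$, i.e.
$$ p \leq \frac{\abs{\T}}{\abs{\T}+\abs{\overline{\T}}\exp\parens[\big]{-\tfrac{\epsilon\bar{\Delta}}{2\Delta}}}. $$
For the sufficient condition I would argue in the other direction: if the \emph{lower} bound already exceeds $p$, namely
$$ p \leq \frac{\abs{\T}}{\abs{\T}+\abs{\overline{\T}}\exp\parens[\big]{-\tfrac{\epsilon c}{2\Delta}}}, $$
then $\pr{\Em(x)\in\T}\geq p$ holds automatically, again by the sandwich. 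Both cases collapse to the single inequality $p \leq \abs{\T}\big/\parens{\abs{\T}+\abs{\overline{\T}}\,e^{-\epsilon a/(2\Delta)}}$, with $a=\bar{\Delta}$ giving the necessary bound and $a=c$ giving the sufficient one.

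Then I would carry out the common algebra once. Cross\-/multiplying and isolating the exponential gives $\exp\parens[\big]{-\tfrac{\epsilon a}{2\Delta}} \leq \tfrac{1-p}{p}\cdot\tfrac{\abs{\T}}{\abs{\overline{\T}}}$; taking logarithms, flipping the sign of the exponent, and multiplying through by the positive constant $2\Delta/a$ yields
$$ \epsilon \geq \frac{2\Delta}{a}\,\ln\parens[\Big]{\frac{p}{1-p}\cdot\frac{\abs{\overline{\T}}}{\abs{\T}}}. $$
Substituting $a=\bar{\Delta}$ and $a=c$ reproduces the two displayed thresholds.

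The step I expect to demand the most attention is the logical bookkeeping rather than the computation: correctly matching ``necessary'' to the upper bound and ``sufficient'' to the lower bound, and keeping the inequality sense straight when the minus sign is pulled out of the exponent (the logarithm's argument may be below one, hence the logarithm may be negative, but dividing by the positive factor $a>0$ preserves the direction). As a consistency check I would note in passing that $c\leq\bar{\Delta}$, so the sufficient threshold is at least the necessary one, exactly as the terminology requires.
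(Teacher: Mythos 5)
Your proposal is correct and matches the paper's (implicit) argument exactly: the paper gives no separate proof, stating only that the corollary follows by ``solving for $\epsilon$'' in the two-sided estimate of \cref{thm:utility-bounds}, which is precisely your pairing of the necessary condition with the upper bound and the sufficient condition with the lower bound, followed by the common algebraic inversion. The logical bookkeeping and the sign handling in your write-up are sound, so nothing is missing.
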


Note that for our \algname algorithm, we have $\bar{\Delta}=\hat{\rating}_x - \check{\rating}_x\leq \Delta$.
Hence for $p=1/2$, the necessary condition becomes
\begin{equation*}
\epsilon \geq 2\ln\parens[\bigg]{\frac{p}{1-p}\cdot\cfrac{\abs{\overline{\T}}}{\abs{\T}}}
	= 2\ln\parens[\bigg]{\cfrac{\abs{\overline{\T}}}{\abs{\T}}}
	= 2\ln\parens[\bigg]{\cfrac{\abs{\Z}-\abs{\T}}{\abs{\T}}}.
\end{equation*}
Given a reasonable choice of $\tau$,
the number $\abs{\T}$ of ``useful'' outputs whose score is at least $\tau$ will be small.
In the case of our \algname mechanism, we can think of $\tau$ as a threshold for the rating function
that distinguishes good alternatives for a given word from poor ones,
and $\abs{\T}$ would reflect the number of suitable substitutes (synonyms).
If we assume $\abs{\T}$ to be bounded by some constant, then $\epsilon\in\Omega(\ln \abs{\Z})$,
that is, $\epsilon$ \textit{needs} to grow logarithmically in the size of the output space $\abs{\Z}$
in order to allow meaningful results.

\section{Evaluation}
\label{sec:results}

In this section, we first describe our implementation of the \algname mechanism
along with associated parameters and our implementation choices.
We then describe our experiment setup and report the evaluation results.
Finally, we compare \algname with a traditional information removal approach
in the same experiment setup.
\TODO{660B-6: Include na\"ive baseline with independent Laplace noise?}

\subsection{Algorithm Implementation and Parameters}
\label{sec:implementation}

We implemented a prototype of our \algname algorithm in Python
using the \texttt{SpaCy} package ({\url{http://spacy.io/}}) for text parsing functionality
as well as the \texttt{numpy} and \texttt{SciPy} packages \cite{scipy,walt2011numpy}
for (vector) computations.
%
%
Besides the explicit parameters mentioned in \cref{alg:synth-tf},
there are various implementation-dependent parameters
that influence \algname in its different stages.
We now describe these parameters and corresponding implementation choices.
\TODO{shorten! focus: vectorizer (morphology, synsets); synthesis (length n, rating function: vector similarity + bigram factor s, epsilon)}

\subsubsection{Vocabulary and Vectorization}
\label{sec:params:vocabulary}
We build a custom vectorizer to extract the vocabulary from the training or a given reference corpus,
and to subsequently transform documents to their \acrshort{BoW} \glspl{tfvec}.
We can specify several special options:
Firstly, we can choose, for each extracted word,
to keep its spelling as-is, to change its morphology through lemmatization,
or to convert it to lower case.
Secondly, we can instruct the vectorizer to include additional terms
that are similar or synonymous to the actually extracted words,
as to provide a greater choice of candidates for replacing a word with a suitable synonym
but hopefully with different spelling to disturb lexical authorship attribution features.
Our implementation uses the synonyms provided by WordNet's synsets.
We remove stop words and numbers by default.



\subsubsection{Similarity Rating Function}
\label{sec:params:similarity-functions}

We now describe the rating function $\rating(v,w)$ 
that expresses the suitability of a substitute term $w$ for an input term $v$.
One fundamental technique are \emph{word vectors} or \emph{embeddings}
which are dense vector representations of words in a real vector space.
They are commonly derived with the intention that similar words have embeddings in the vector space that are nearby.
We can therefore compute the similarity between two words
simply and efficiently as \emph{cosine similarity} 
between their corresponding word vectors. 
\TODO{660B-5: Address context-(in)senstitvity.}
Two recent models to derive word vectors that achieve high accuracy in word similarity and analogy benchmarks
are \enquote{word2vec} \cite{mikolov2013distributed, mikolov2013efficient}
and \enquote{GloVe} \cite{pennington2014glove}.

As we saw in \cref{sec:prevent-aa},
features such as the frequency of certain words and character $n$-grams
often make an essential and decisive contribution to authorship attribution methods.
Suppose we can choose a substitute for a given input term from a set of candidates with comparable similarity rating.
Then to best prevent the attack, it is beneficial to pick the candidate that differs most in spelling
from the input in order to obscure our word and $n$-gram frequencies.
We can achieve this by including the (normalized) Levenshtein or $n$-gram distance
in the rating function for the terms.
Note that care must be taken to weight this appropriately --
a too strong preference for differently-spelled substitutes will often pick completely different words
that also have a different meaning from the original word, thus also negatively affecting the utility.

We have implemented the word similarity rating function as 
\[
	\rating(v,w) := \cos(v,w) - s B(v,w),
\]
where $\cos(v,w)$ is the cosine similarity between the corresponding GloVe \cite{pennington2014glove} word vectors,
and $B(v,w)\in[0,1]$ is the \emph{bigram overlap}, i.e. the proportion of matching letter bigrams in $v$ and $w$.
The scaling factor $s$ determines if and how strong the bigram overlap affects the rating.
%
As optimization, we precompute the word similarity ratings and probabilities for the Exponential mechanism
for the entire vocabulary, which yields a significant performance boost. 

\subsection{Experiment Description}
\label{sec:results:evaluation}

In this section, we describe the context and setup of our evaluation.

\paragraph{Dataset}
We perform a series of experiments with our algorithm
on the ``20 newsgroups'' dataset\footnote{\url{http://qwone.com/~jason/20Newsgroups/}}.
It comprises almost 19,000 postings from 20 different newsgroups,
and comes with predefined train (60\%) and test (40\%) sets
which we use throughout our experiments.
For the text classification task, a label is provided for each message indicating the corresponding newsgroup.
For the authorship attribution task, we extracted the ``From`` field in the header of each message
and use it as author identifier.
Note that we strip header and footer data 
before performing the actual classification and identification tasks
as to make them more realistic.

\paragraph{Attack Scenarios}

After filtering out missing and ambiguous identifiers, we count 5735 authors, 
but the majority provides insufficient training samples
(below 20 for 5711 authors)
for properly fitting a model.
We therefore evaluate the attack only for the ``top'' authors with the largest number of messages in the dataset.
Since the number of candidate suspects from which the correct author has to be determined
also can influence the authorship attribution performance,
we evaluate the attack for the \emph{top 5} and \emph{top 10} authors.
\Cref{tab:attack-scenarios} provides the number of train and test messages per author.

Another issue with the dataset 
is that some users are active in only a single newsgroup,
in which case knowledge of authorship (attack) implies knowledge of the targeted newsgroup (utility).
We therefore devise two subsets of authors:
\begin{description}
	\item[Any] Each suspect author can have postings in any number (one or more) of newsgroups.
	\item[Multi] Each author must be active in \textit{at least two} different newsgroups.
\end{description}
The idea of the ``Multi'' group is to reduce the similarity between the attacker's and analyst's tasks
to allow a clearer distinction when evaluating the impact of our anonymization technique.


\begin{table}
	\caption{Attack scenarios with minimum \emph{per author} numbers
		for active groups and train/test messages in the dataset.}
	\label{tab:attack-scenarios}
	\begin{tabular}{ccccc}  
		\toprule
		Scenario    & Suspects & \#Groups & \#Train & \#Test \\
		\midrule
		Top 5/Any	& Top 5  & $\geq1$ 	& $\geq35$ & $\geq17$ \\
		Top 10/Any 	& Top 10 & $\geq1$ 	& $\geq28$ & $\geq9$ \\
		Top 5/Multi	& Top 5  & $\geq2$ 	& $\geq29$ & $\geq9$ \\
		Top 10/Multi& Top 10 & $\geq2$ 	& $\geq21$ & $\geq8$ \\
		\bottomrule
	\end{tabular}
\end{table}

\paragraph{Processing Pipeline.}


\begin{figure*}
	\centering
	\begin{subfigure}{\columnwidth}
		\includegraphics[width=\textwidth]{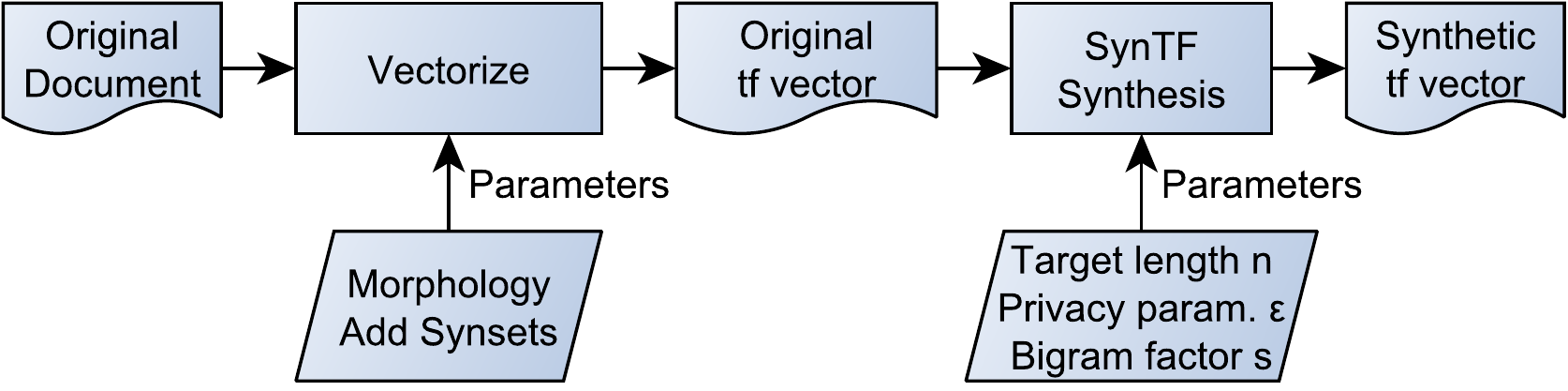}
		\caption{The main SynTF processing pipeline, including parameters.}
		\label{fig:pipeline}
	\end{subfigure}
	\hfill
	\begin{subfigure}{\columnwidth}
		\includegraphics[width=\textwidth]{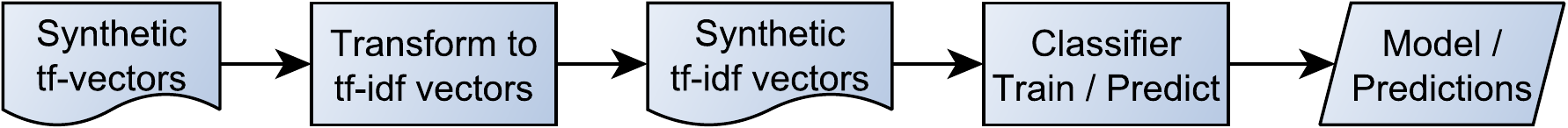}
		\caption{Subsequent text classification pipeline.}
		\label{fig:analyst}
		\vspace{1ex}
		\includegraphics[width=\textwidth]{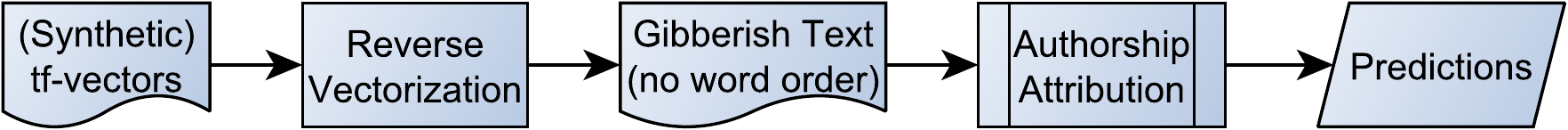}
		\caption{Subsequent authorship attribution pipeline.}
		\label{fig:attacker}
	\end{subfigure}
	\caption{Processing pipelines for the main SynTF mechanism and subsequent analyst and attacker tasks.} %
\end{figure*}

All documents traverse a processing pipeline that can be broken down into three parts:
For each document, the main \algname pipeline (\cref{fig:pipeline})
first produces a synthetic \gls{tfvec} (cf. \cref{sec:approach:dp-shuffle}).
It can be influenced by a number of parameters as described in \cref{sec:implementation}.
Next, the synthetic \glspl{tfvec} traverse the analyst's text classification pipeline (\cref{fig:analyst})
and the attacker's authorship attribution pipeline (\cref{fig:attacker})
to measure the prediction performance for each task.
In both cases, we evaluate a multinomial na\"ive Bayes classifier and a linear SVM.
%
We perform 10 runs of the entire pipeline (anonymization + evaluation) for each combination of parameters
to reduce fluctuations and get stable results.


The analyst (cf. \cref{fig:analyst}) first transforms the \glspl{tfvec} to \acrshortpl{tfidfvec}
which are commonly used in classification tasks.
He then trains a classifier with the \emph{training} subset of the dataset,
and subsequently uses it to predict the newsgroups for the \emph{test} subset.
We implement the classification in Python based on \texttt{scikit-learn} \cite{scikit-learn},
using its \texttt{MultinomialNB} classifier with smoothing ($\alpha=0.01$),
and its \texttt{LinearSVC} classifier with default parameters ($C=1$).

For the attack, we make use of the ``JStylo'' authorship attribution framework \cite{mcdonald2012use}.
It supports several extended feature sets such as ``WritePrints'' proposed in \cite{abbasi2008writeprints}.
WritePrints includes additional stylistic features (cf.~\cref{sec:prevent-aa}) on top of the usual \gls{BoW}
that have to be extracted from full texts.
However, since the attacker only gets synthetic \glspl{tfvec} and not full texts,
she first converts the numbers in the \glspl{tfvec} to text by repeating each word accordingly,
which allows at least partial deduction of WritePrints features (``reverse vectorization'' in \cref{fig:attacker}).
Note that the ``full'' WritePrints feature set contains a virtually endless number of features
and severely degrades performance (speed).
Furthermore, the authors of \cite{mcdonald2012use} have shown that despite its title,
the ``limited'' version even outperforms the ``full'' WritePrints in terms of accuracy,
which we could confirm in own experiments.
Therefore, we keep the default JStylo configuration with the ``\acrlong{WPL}'' feature set.
\TODO{
	Include own results confirming degradation??}
JStylo builds on the Weka machine learning library.
We use its \texttt{NaiveBayesMultinomial} classifier with Laplace smoothing
and its \texttt{SMO} SVM classifier with linear kernel and $C=1$ by default.

%
%

\paragraph{Finding Optimal Parameters}

\begin{table}
	\caption{Evaluated and \underline{optimal} SynTF parameters.} 
	\label{tab:eval-params}
	\begin{tabularx}{\columnwidth}{llX}  
		\toprule
		Parameter    & Values & Description \\
		\midrule
		morpho- & \underline{lemma} & Lemmatize words. \\
		logy    & lower             & Convert words to lower case. \\
		        & orth              & Leave spelling unchanged. \\
		\addlinespace[1pt]
		synsets & true/\underline{false} & Extend vocabulary with additional synonyms from WordNet. \\
		\midrule
		$s$ & 0, 0.1, 0.2, 			& Impact factor of letter bigram \\
		    & \underline{0.3}, 0.4	& overlap on rating function $\rating$. \\
		\addlinespace[1pt]
		$n$ & 100, \underline{150}, 200 & Length of output vector (words). \\
		\addlinespace[1pt]
		$\epsilon$ & 35--55 (\underline{47.5}),		& Privacy parameter (stepsize 2.5). \\
		           & effectively $\underline{25.4}$	& Effective loss $\ell$,
		           									  cf. sec.~\ref{sec:shuffle:tight-bounds}. \\
		\bottomrule
	\end{tabularx}
\end{table}

We perform a grid search over the \algname parameters 
listed in \cref{tab:eval-params} to find ``optimal'' parameters
in the sense that they should simultaneously strongly affect authorship attribution
but mostly leave classification into newsgroups unaffected.
As metric to find these optimal settings
we use the difference between the relative performance impacts on utility and attack:
Given parameters $\mathbf{p}$, denote by $\beta_U(\mathbf{p})$ the relative performance
of the analyst's classification task (measured as $F_1$ score),
and similarly denote by $\beta_A(\mathbf{p})$ the relative performance of the attacker's task.
Then the optimal parameters are $\mathbf{\hat p} = \mathrm{argmax}_{\mathbf{p}} (\beta_U(\mathbf{p})-\beta_A(\mathbf{p}))$.
Since we want them to equally cover all four attack scenarios,
we find optimal parameters that maximize the \emph{minimum} difference $\beta_U(\mathbf{p})-\beta_A(\mathbf{p})$
over all attack scenarios.
Furthermore, we perform 10 runs of the anonymization--evaluation process
for each combination of parameters to reduce fluctuations and get stable results.

\subsection{Discussion of Results}

After running the evaluation, we found the optimal parameters highlighted in \cref{tab:eval-params}
with privacy parameter $\epsilon=47.5$.
However, our tight bounds analysis (cf. \cref{sec:shuffle:tight-bounds}) 
shows that the effective privacy loss $\ell(\Em)\approx 25.4$ is only about half as large. 
\Cref{tab:eval-results} provides exemplary performance figures
in the ``Top 10/Any'' scenario for both topic classification and authorship attribution.
\TODO{(660C-2) Point out Aggreg. vs. Indep. Analysis}
\Cref{fig:cmp-best} depicts the relative performance 
between utility (green lines, left y-axis) and attack (red lines, right y-axis)
in the different stages of \algname. 
The bottom x-axis indicates the privacy parameter $\epsilon$,
with the corresponding effective privacy loss values $\ell(\Em)$ on the top.
The dotted, dashed, and solid lines mark the utility and attack performances
with the original (plaintext), vectorized, and synthetic data, respectively,
where we used the optimal parameters for vectorization and synthesis as mentioned above.

\begin{figure*}
	\centering
	\begin{subfigure}[t]{0.498\textwidth}
		\centering
		\includegraphics[width=\textwidth]{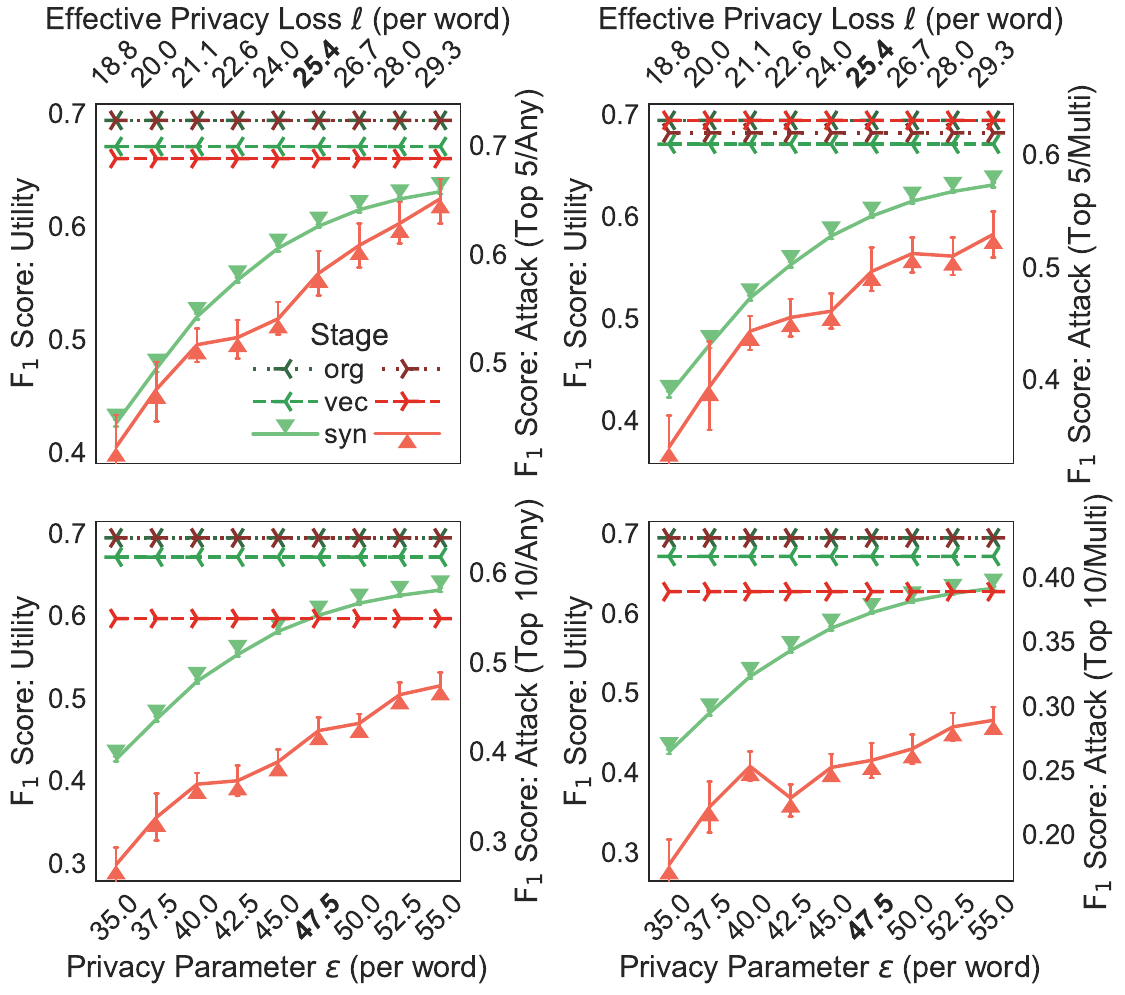}
		\begin{minipage}{0.96\textwidth}
			\caption{Multinomial na\"ive Bayes.}
		\end{minipage}
	\end{subfigure}
	\begin{subfigure}[t]{0.498\textwidth}
		\centering
		\includegraphics[width=\textwidth]{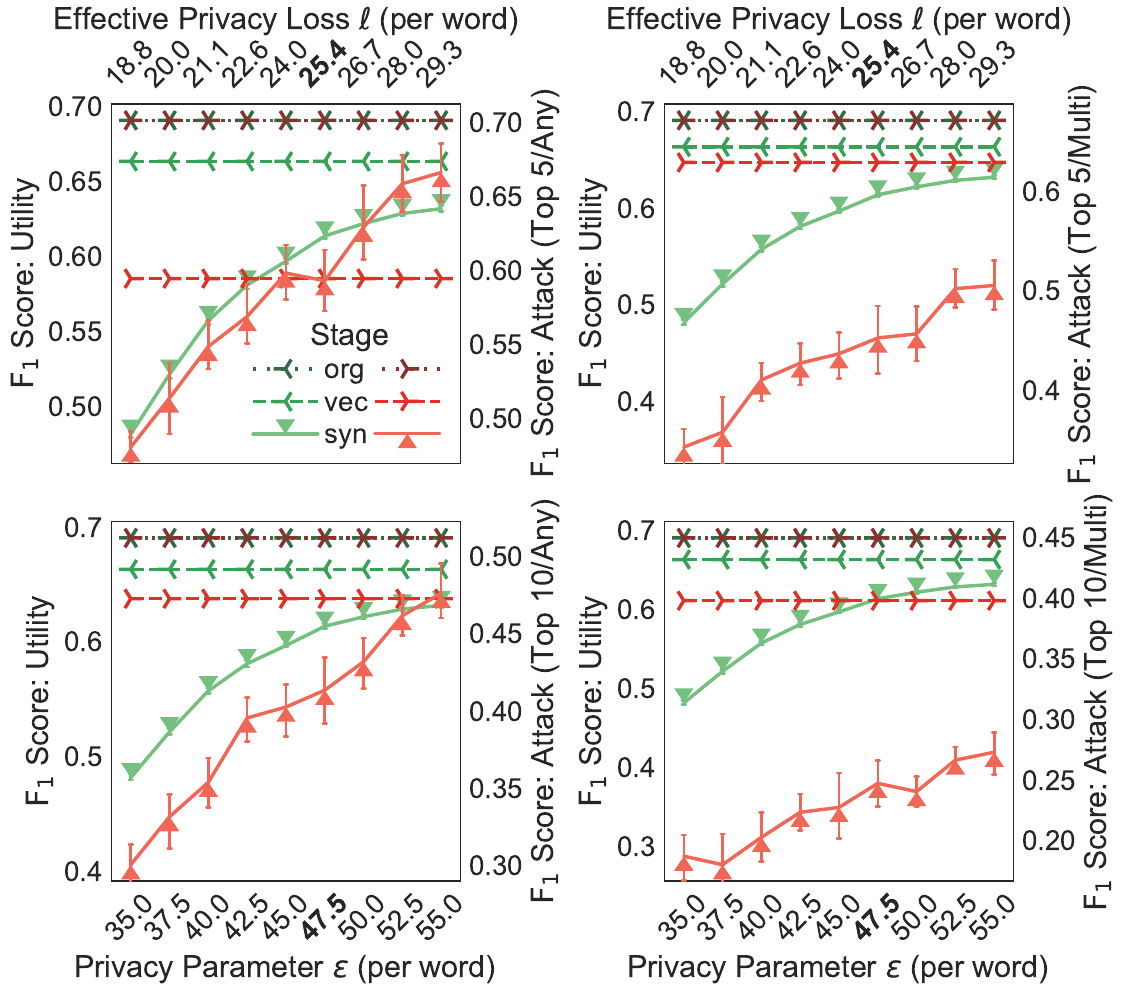}
		\begin{minipage}{0.96\textwidth}
			\caption{Linear SVM.}
		\end{minipage}
	\end{subfigure}
	\caption{Relative performance of analyst (green) and attacker (red) tasks in different stages of the \algname process,
		per attack scenario (dotted: original data, dashed: \glspl{tfvec}, solid: synthetic \glspl{tfvec}).
		A (positive) gap between the green and red lines shows how much the attack is more affected than utility.
		Impact on attack increases with number of authors and active groups.%
	}
	\label{fig:cmp-best}
\end{figure*}

We observe that the vectorization already affects the attack more due to
the loss of structural and syntactic features, except in one case (Top 5/Multi).
Note that the size of the (positive) gap between the green and red lines
indicate the analyst's gain over the attacker in terms of the relative performance
of the corresponding stage of the anonymization.
Obviously both utility and attack suffer with a decreasing privacy parameter $\epsilon$.
However, in most cases the gap between analyst and attacker is even higher than after vectorization,
which indicates a growing advantage for the analyst.
Furthermore, it shows that our \algname mechanism successfully impairs authorship attribution
while having only a mild effect on the classification task.

\paragraph*{Impact of Attack Scenarios}
Comparing the four scenarios with respect to the gap size,
we make the following deductions:
As expected, authorship attribution quickly becomes harder
with an increasing number of suspect authors.
Similarly, excluding authors who are active in only one newsgroup
widens the gap, as we can see when going from the ``Any'' to the ``Multi'' scenarios.
This indicates that	our method is even more effective
when the benign and malicious tasks are actually based on \emph{distinct} problems.

\paragraph{Impact of Parameters from \Cref{tab:eval-params}}

A key factor in the success of our method is the letter bigram overlap $B$
in the rating function $\rating$.
Its effect of preferring synonyms with different spelling
improves the capability of our method to prevent authorship attribution attacks.
%
We illustrate this effect depending on the bigram overlap factor $s$ in \cref{fig:bigram-impact}:
Without bigram overlap ($s=0$), the attacker has an advantage in all ``Top 5'' scenarios (red bars). 
Only when $s \geq 0.3$, we see a shift of power in favor of the analyst (green bars).
In the ``Top 10'' scenarios, the analyst enjoys an advantage even without the bigram overlap,
but we can roughly double his advantage if we choose the optimal value $s=0.3$.

\begin{figure}
	\centering
	\includegraphics[width=0.96\columnwidth]{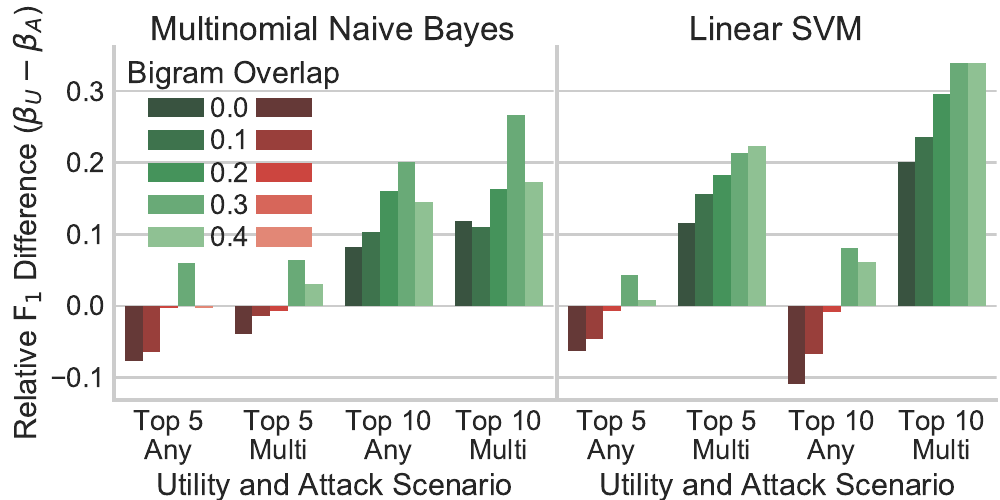}
	\caption{Impact of letter bigram overlap factor $s$.} 
	\label{fig:bigram-impact}
\end{figure}

Regarding morphology,
observe that the use of upper and lower case letters is a stylistic feature
that can pose a clue for authorship attribution but barely has any relevance for topic inference.
Therefore, transforming all words to lowercase affects the attacker more than the analyst.
Lemmatization strips off word endings and hence reduces the attacker's
information on writing style further, but it also has an impact
on classification since the meaning can change between a word and its lemma.
Still, in terms of our definition of ``optimal'' parameters,
using lemmatized words gave the best relative performance gain for the analyst,
indicating that the lost word endings are more severe for the attack.

Other parameters are less insightful: 
Increasing the output length will help increase both tasks' performance,
however, the gain becomes less for larger output lengths.
Moreover, the inclusion of additional synonyms in the vocabulary
did not provide any benefit.

\paragraph{SVM Anomaly}
We observe one anomaly in the ``Top 5/Any'' scenario for the SVM.
Apparently, vectorization already causes a drastic reduction of the attack performance.
However, for $\epsilon\geq45$, going from vectorized to synthetic vectors
\emph{increases} the attack performance.
This is unexpected since the information \emph{lost in vectorization}
will not be restored by the synthesis process.
Our current hypothesis is that the SVM might overfit on the vectorized training data,
causing poor predictions on the vectorized test data,
and the randomness in the synthesis step in turn acts as regularization.


\begin{table}
	\caption{Evaluation results (Top 10/Any)}
	\label{tab:eval-results}
	\begin{tabular}{rllllllll}
		\toprule
		& \multicolumn{3}{c}{Utility} & \multicolumn{3}{c}{Attack} & Gain \\ 
		\cmidrule(r){2-4} \cmidrule(lr){5-7} \cmidrule(lr){8-8}
		Method			& $F_1$ & P & R & $F_1$ & P & R & $\Delta F_1$ \\ 
		\midrule
		none (original) & 0.69 & 0.71 & 0.70 & 0.64 & 0.71 & 0.63  & 0.06 \\
		\addlinespace[1pt]
		\algname abs.	& 0.60 & 0.61 & 0.61 & 0.42 & 0.44 & 0.43  & 0.18 \\
		scrubadub abs.	& 0.64 & 0.65 & 0.65 & 0.57 & 0.63 & 0.57  & 0.06 \\
		\addlinespace[1pt]
		\algname rel.	& 87\% & 86\% & 87\% & 66\% & 61\% & 69\%  & 20\% \\
		scrubadub rel.	& 92\% & 92\% & 92\% & 90\% & 88\% & 91\%  & 02\% \\
		\bottomrule
	\end{tabular}
\end{table}

\subsubsection{Comparison with Scrubbing Methods}
\label{sec:eval:scrub}

We run the open source \texttt{scrubadub} (\url{http://scrubadub.readthedocs.org/})
tool on the 20 newsgroups dataset to remove \gls{PII}
and evaluate the utility and attack performance in our scenarios.
\Cref{fig:cmp-scrub} shows a comparison of the results with our \algname method
and optimal parameters.
The results indicate that our method outperforms the scrubbing technique
in preventing the attack in all four attack scenarios,
at a comparable level of utility.
For instance, in the ``Top 10/Any'' scenario listed in \cref{tab:eval-results},
\algname achieves an $F_1$ score of $0.60$ for classification,
where scrubadub is slightly better with $0.64$, down from $0.69$.
For the attack, however, scrubadub drops from $0.64$ to $0.57$, 
whereas \algname manages to more than triple the reduction
and push the attacker's performance down to $0.42$.

\begin{figure}
	\centering
	\includegraphics[width=0.93\columnwidth]{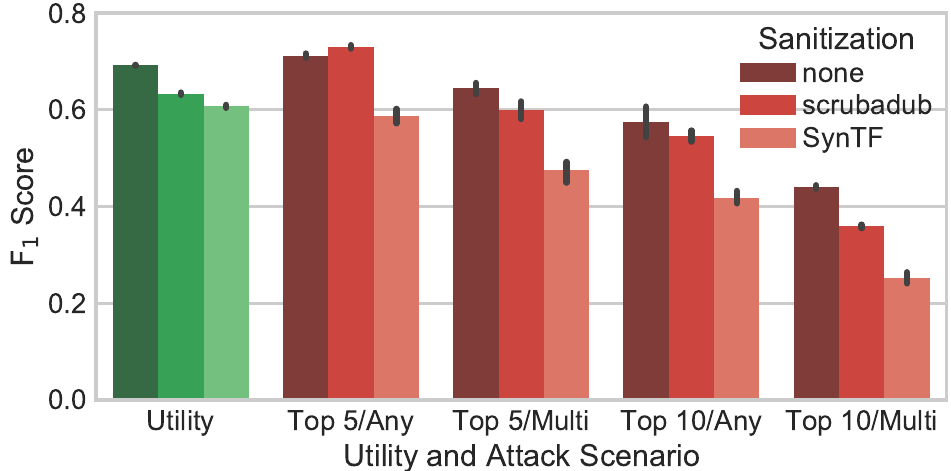}
	\caption{Comparing \algname and traditional data removal.}
	\label{fig:cmp-scrub}
\end{figure}

\section{Related Work} 
\label{sec:related}

\TODO{Reduce to one page. Add references for large-$\epsilon$ \dlp with individual analysis/local model.}

\paragraph{Authorship Obfuscation}
Several countermeasures against authorship attribution have been proposed.
Rao and Rohatgi \cite{rao2000can} examine newsgroups postings and identify the authors
from the body of the text by analyzing the frequency of function words.
They suggest to either use automated \emph{machine translation} to a foreign language and back,
or to \emph{educate authors} who want to write anonymous documents about authorship attribution attacks.
However, these countermeasures are insufficient:
In \cite{caliskan2012translate}, Caliskan et al. show that authorship attribution is still possible
even after performing multiple machine translations. 
Furthermore, Afroz et al. \cite{afroz2012detecting} show that \emph{deceptive writing}
by an author trying to imitate another or to obfuscate his own writing style
can be detected with high accuracy.

Anonymouth \cite{mcdonald2012use} is based on JStylo
and uses clustering of two references sets with the author's and foreign sample texts
to propose manual changes that have to be made to the document.
The process must be repeated until authorship attribution is prevented sufficiently.
%
Kacmarcik and Gamon \cite{kacmarcik2006obfuscating} follow a similar
but more automated approach based on decision trees and SVMs.
Their method adjusts the \gls{tfvec} of a document by moving
its feature values closer to those of other writers,
as to prevent the classifier from identifying the correct author.
%
While the countermeasure is effective against the evaluated SVMs with up to 70 features,
the more sophisticated \emph{unmasking} approach by Koppel and Schler \cite{koppel2004authorship, koppel2007measuring}
is still able to distinguish the actual author from others.
Kacmarcik and Gamon in turn propose a ``deep obfuscation'' variant of their method
which iteratively tries to make unmasking harder, however,
this requires more and more changes to be made to the documents.
%

The results indicate that both methods are successful in preventing authorship attribution attacks in theory.
However, the authors of Anonymouth \cite{mcdonald2012use} observed
that while users were able to implement the suggested changes for very small feature sets with only 9 features,
they were overstrained with the changes for the more complex ``WritePrints (Limited)'' feature set 
which we also used in our experiments.
Similarly, Kacmarcik and Gamon \cite{kacmarcik2006obfuscating} observed that for a deep level of obfuscation,
one would have to consider more and more features and make corresponding changes to the document,
thus increasing the complexity for the user.
In practice, both methods seem cumbersome for the user if a deep level of obfuscation shall be reached.
Furthermore, both methods only prevent authorship attribution
with respect to a \emph{specific} reference corpus with other authors.
While our method does not produce human-readable texts,
it requires no manual changes to the documents,
and its protection is independent of a reference corpus.

\paragraph{De-Identification}
De-identification (or \emph{scrubbing}) methods provide a way
to remove \acrfull{PII} from textual documents.
They are often motivated by the health care and medical sectors and
focus on identifying and removing particular types of personal information
such as protected health information (PHI),
a list of 18 identifiers 
as specified in the US Health Insurance Portability and Accountability Act (HIPAA) \cite{hipaa}. 
Popular methods include the ``Scrub System'' \cite{sweeney1996replacing},
the ``MITRE Identification Scrubber Toolkit'' (MIST) 
\cite{aberdeen2010mitre},
or the PhysioNet ``deid'' software package 
\cite{neamatullah2008automated}.
They typically work with lists of names and identifiers, regular expressions,
simple heuristics, and also machine learning techniques
to identify and remove pieces of text that constitute \gls{PII}.

While this kind of information must be removed
to protect the privacy of the subjects mentioned in the document,
our experiments in \cref{sec:eval:scrub} show that de-identification based on scrubbing
provides no adequate protection for the document's author although this is often critical,
as in the case of complaint letters or patient records
to protect the privacy of the treating physician. 
Moreover, we found that publications on these methods
typically only evaluate their methods' ability to identify and remove
all pieces of \gls{PII} in the text
(cf. the survey by Uzuner et al. \cite{uzuner2007evaluating}).
We have not seen any evaluation on the impact of scrubbing 
on further processing with text mining techniques such as document classification,
and more importantly, we have not found an evaluation whether and to what extend
these methods prevent authorship attribution techniques.

%
%
%


\paragraph{Differential Privacy}
Differential privacy has been successfully applied to a wide range of problems
from simple statistical functions to machine learning.
The survey by Dwork \cite{dwork2008differential} provides a good overview of some earlier results.
It is commonly used to provide \emph{aggregate} statistics,
that is, multiple records are combined into one result.
A good example is RAPPOR \cite{erlingsson2014rappor},
which allows the collection of anonymized user statistics even over time.
However, releasing aggregate information only allows inferences on an entire population,
whereas we want to classify each document individually.
%
Releasing \emph{individual} data with an $\epsilon$ comparable to aggregating mechanisms
causes too much noise for individual records as it masks any difference
(topic, sentiment, etc.) between two inputs and hence prevents any utility.
The issue is well-known in the literature and has been observed
e.g. in the context of locations \cite{machanavajjhala2008privacy,andres2013geo},
graphs \cite{sala2011sharing}, and recommender systems \cite{machanavajjhala2011personalized}.
Approaches typically involve relaxing the privacy- or adjacency-definition 
\cite{chatzikokolakis2013broadening,andres2013geo,he2014blowfish}.
Andrés et al. \cite{andres2013geo} circumvent the issue for location data
by generalizing \dlp to metrics \cite{chatzikokolakis2013broadening}.
For graphs, Hay et al. \cite{hay2009accurate} define two varians of \dlp,
namely \emph{node} and \emph{edge} privacy, where two graphs are considered adjacent
if they differ either in an entire node (including its edges) or in just a single edge.
According to Kasiviswanathan et al. \cite{kasiviswanathan2013analyzing},
most works focus on the strictly weaker edge privacy 
since it is harder to create node private algorithms providing good utility
with a comparable privacy loss.
For instance, Sala et al. \cite{sala2011sharing} revert to edge privacy for sharing graphs
and obtain usable results with $\epsilon=100$ \emph{per edge} (instead of \emph{per node}).
In comparison, our \algname mechanism achieves a privacy loss of only 25.4
\emph{per word} in the output (instead of \emph{per document}).

\section{Conclusion}
\label{sec:conclusion}

We have presented \algname, a novel approach to produce anonymized,
synthetic \acrlongpl{tfvec} which can be used in lieu of the original \acrlongpl{tfvec}
in typical applications based on the \acrlong{VSM}.
Our method produces sparse vectors which are favorable regarding performance and memory efficiency.
We have proved that our method fulfills \dlp
which currently serves as a ``gold standard'' for privacy definitions.
Since our method anonymizes each text individually,
it can be used locally at the data source to anonymize documents on-premise before collection,
e.g., to obtain anonymized training data for machine learning
or provide personalized ads based on anonymized emails or search queries. 

Although our method requires a large $\epsilon$ to get reasonable utility,
we provide evidence that this is necessary:
First, we want to be able to analyze records independently from each other,
thus the anonymization must \emph{not} hide the influence of individual records in the result.
Second, we have derived a necessary condition 
on the privacy parameter $\epsilon$ for the Exponential mechanism
indicating that it
must grow logarithmically in the size of the output space when high utility is required
but only a limited number of ``good'' outputs is available.
To further address the issue, 
we have derived alternative bounds on the privacy loss of the Exponential mechanism,
which in our case provide a substantial reduction of almost 50\%.

We have performed an extensive evaluation of \algname on the 20 newsgroups dataset
and analyzed the influence of different parameters.
Our results indicate that
it effectively prevents authorship attribution
while facilitating tasks such as classification (utility).
In contrast, our experiments show that traditional scrubbing methods
are insufficient at preventing authorship attribution attacks.

\begin{acks}
The authors would like to thank the anonymous referees for their
valuable comments and helpful suggestions.
This work was supported by the European Union’s
Horizon 2020 Research and Innovation Programme
under grant agreement No.~653497 (PANORAMIX).
\end{acks}

\bibliographystyle{ACM-Reference-Format}
\bibliography{ms}

\end{document}